\newtheorem{theorem}{Theorem}
\begin{document}

\title{User Preference Learning Based Edge Caching for Fog Radio Access Network} 

\author{Yanxiang~Jiang,~\IEEEmembership{Senior Member,~IEEE},
Miaoli~Ma,
Mehdi~Bennis,~\IEEEmembership{Senior~Member,~IEEE}, Fu-Chun~Zheng,~\IEEEmembership{Senior~Member,~IEEE}, and~Xiaohu~You,~\IEEEmembership{Fellow,~IEEE}
\thanks{Manuscript received April 15, 2018, revised September 18, 2018, and accepted November 2, 2018. Part of this work has been presented at the 6th IEEE GLOBECOM Workshop on Emerging Technologies for 5G and Beyond Wireless and Mobile Networks (ET5GB), Singapore, December, 2017.
The associate editor coordinating the review of this paper and approving it for publication was T. He.}
\thanks{This work was supported in part by
the Natural Science Foundation of Jiangsu Province under grant
BK20181264,
the Research Fund of the State Key Laboratory of
Integrated Services Networks (Xidian University) under grant ISN19-10,
the Research Fund of the Key Laboratory of Wireless Sensor Network $\&$ Communication (Shanghai Institute of Microsystem and Information Technology, Chinese Academy of Sciences) under grant 2017002,
the National Basic Research Program of China
(973 Program) under grant 2012CB316004,
and the U.K. Engineering and Physical Sciences Research Council under Grant EP/K040685/2.}
\thanks{Y. Jiang is with the National Mobile Communications Research Laboratory, Southeast University, Nanjing 210096, China,
the State Key Laboratory of Integrated Services Networks, Xidian University, Xi'an 710071, China, and the Key Laboratory of Wireless Sensor Network $\&$ Communication, Shanghai Institute of Microsystem and Information Technology,
Chinese Academy of Sciences, 865 Changning Road, Shanghai 200050, China (e-mail: yxjiang@seu.edu.cn).}
\thanks{M. Ma and X. You are with the National Mobile Communications Research Laboratory, Southeast University, Nanjing 210096, China (e-mail: milies10@126.com, xhyu@seu.edu.cn).}
\thanks{M. Bennis is with the Centre for Wireless Communications, University of
Oulu, Oulu 90014, Finland (e-mail: mehdi.bennis@oulu.fi).}
\thanks{F. Zheng is with the School of Electronic and Information Engineering, Harbin Institute of Technology, Shenzhen 518055, China,
and the National Mobile Communications Research Laboratory, Southeast University, Nanjing 210096, China. (e-mail: fzheng@ieee.org).}
}

\maketitle

\begin{abstract}
In this paper, the edge caching problem in fog radio access network (F-RAN) is investigated. By maximizing the overall cache hit rate, the edge caching optimization problem is formulated to find the optimal policy. Content popularity in terms of time and space is considered from the perspective of regional users. We propose an online content popularity prediction algorithm by leveraging the content features and user preferences, and an offline user preference learning algorithm by using the {online gradient descent} (OGD) method and the {follow the (proximally) regularized leader} (FTRL-Proximal) method. Our proposed edge caching policy not only can promptly predict the future content popularity in an online fashion with low complexity, {but also} can track the content popularity with spatial and temporal popularity dynamic in time without delay.
{Furthermore, we design two learning based edge caching architectures.
Moreover, we} theoretically derive the upper bound of the popularity prediction error, the lower bound of the cache hit rate, and the regret bound of the overall cache hit rate of our proposed edge caching policy. Simulation results show that the overall cache hit rate of our proposed policy is superior to those of the traditional policies and asymptotically approaches the optimal performance.
\end{abstract}
\begin{keywords}
\normalsize F-RAN, edge caching, cache hit rate,  content popularity, user preference.
\end{keywords}


\section{Introduction}

With the continuous and rapid proliferation of various intelligent devices and advanced mobile application services, wireless networks have been suffering from an unprecedented data traffic surge in recent years.
Although centralized cloud caching and computing in cloud radio access {network} (C-RAN) 
can provide reliable and stable service for end users during off-peak periods \cite{C-RAN, M-Bennis}, ever-increasing mobile data traffic brings tremendous pressure on C-RAN 
with capacity-limited fronthaul links and centralized baseband unit (BBU) pool, which may cause communication interruptions or traffic congestions especially at peak traffic moments. The main reason is that as {various} social applications become more and more popular, data traffic  over  fronthaul links surges with a lot of redundant and repeated information, which further worsens the fronthaul constraints. In this case, a feasible solution is to shift a small amount of resources such as communications, computing, and caching to network edge, and enable most of the frequently requested contents being served locally.
At this point,
fog radio access {network} (F-RAN) as a complementary network architecture was proposed, which can effectively reduce fronthaul load by placing most popular contents
closer to the requesting users
and extending  traditional cloud computing paradigm to  the network edge \cite{Bigdatacaching,F-RAN, Bennis}.
Up until now, F-RAN has  attracted more and more attention from researchers and engineers.
In F-RAN, traditional access points are turned into fog access points (F-APs) equipped with limited caching and computing resources, where edge caching is a key component to improve the performance of F-RAN. Due to storage constraints and fluctuating spatio-temporal traffic demands, however, F-APs face a myriad of challenges.
For example, how, what and when to strategically store contents in their local caches in order to achieve a higher cache hit rate?

Traditional caching policies such as first in first out (FIFO) \cite{FIFO}, least recently used (LRU) \cite{LRU}, least frequently used (LFU) \cite{LFU} and their variants \cite{ref_a} have been widely applied in wired networks, where there are abundant caching and computing resources and the served area is usually very large.
However,
these traditional caching policies may not be  applied well in wireless networks due to the characteristics of edge nodes such as smaller coverage areas and limited caching resources, and they
may suffer major performance degradation since they may not be able  to predict future content popularity correctly.
Most of the existing works  on edge caching in wireless networks assumed that the content popularity  is already known or subject to a uniform distribution, and then focused on exploring the optimal caching policy under {the above assumptions} \cite{games,factory,multicast}.
The edge caching problem was formulated as a many-to-many matching game  between small base stations and service providers' servers in \cite{games},  it {was} converted to an approximation facility location problem in \cite{factory}, and {the}  successful transmission probability was maximized   to obtain a local optimal caching and multicasting design in a general region in \cite{multicast}. However, these studies are inconsistent with the reality. Generally, content popularity is not as traceable as a unified distribution no matter what kind of caching policy is applied.

By considering the time-varying nature of contents, recent works have  turned to exploring sophisticated edge caching policies by learning future content popularity. In \cite{TL}, the content popularity matrix was estimated through transfer learning by leveraging user-content correlation and information transfer between time periods.
In \cite{Bharath}, the training time for transfer learning was analyzed for obtaining a better estimation performance.
Nevertheless, the content popularity matrix remains typically to be large, which needs a great deal of calculation in the estimation.
Meanwhile, the transfer learning approach has a poor performance for the case of low information correspondence ratio.
In \cite{CMAB}, the cache content placement problem was modeled as a contextual multi-arm bandit problem and an online policy was presented to learn the content popularity.
This policy  learns the content popularity independently across contents whereas
ignores the content similarity and {the} impact of user preference on content popularity, thereby resulting in high training complexity and slow learning speed.
A low complexity online policy was proposed in \cite{Trend}, where content popularity was learned based on the assumption that the expected popularities of similar contents are similar.
It performs  well for video caching but may be ineffective for other types of content caching.
However, all the above studies assume that the content popularity remains unchanged for a certain time period and the content library is stationary, and ignore to consider  spatial changes of content popularity, and  therefore cannot truly reflect the changes of content popularity.
In real communications scenarios, both the coverage area of an edge node and the number of users that it can  serve are limited.
The change of content popularity in the time and space dimensions is real-time.
{Due to the continuous emergence of new contents, the content library in the cloud content center in the time dimension must change.
Moreover, the set of  current users  in a specific region or scenario may change dynamically over time due to  user mobility, and the  content popularity may thus fluctuate too. Furthermore, due to the randomness of user requests, the content popularity will be dynamically changing over time.
In practice, different users may have different degrees of interest, i.e., different user preferences, in the same content. Correspondingly, the request possibilities for the same content among different regions or scenarios gathered by users with different user preferences are different. This will result in a  content popularity difference for the same content among different regions or scenarios.}
These changing factors make it impossible to measure the content popularity merely through a unified distribution or a simple prediction. The small changes of the content popularity directly affect the caching decisions. Therefore, it is necessary to explore spatial and temporal dynamic of  content popularity and track the dynamic in a timely manner for the ensurance of continuously caching {popular} contents, the achievement of optimal caching decisions, and the maximization of cache hit rate.

Motivated by the aforementioned discussions, our main contributions are summarized below.
\begin{enumerate}
 \item We put forward a new idea of content popularity prediction.
  Unlike the static approach, we consider content popularity in terms of time and space from the perspective of regional users
  and propose an online content popularity prediction algorithm, which can predict the future content popularity of a certain region in an online fashion without any restriction on content types 
and track the popularity change in real time.
   \item We propose an offline user preference learning algorithm, which can discover the user's own preference through its historically requested information.
      By monitoring the average prediction error in real time, it can be initiated automatically for relearning of user preference and continuous offline learning can thus be avoided.
   \item We design two learning based edge caching architectures for F-RAN.
    They differ in  that  the offline user preference learning functionality is transferred from the F-APs in the first architecture to the  smart user equipments (UEs) in the second one.
    Specifically, we introduce cooperative caching among regional F-APs and make more efficient usage of limited caching resources. Besides, we introduce a new module to enable regular monitoring of regional users by considering the impact of user mobility on cache decisions.
   \item We analyze the performance of our proposed edge caching policy. We first derive the upper bound of the popularity prediction error of our proposed online content popularity prediction algorithm, and reveal the sub-linear relationship between the cumulative prediction error and the total number of content requests.
       We then derive the regret bound of the overall cache hit rate of our proposed edge caching policy, and show through theoretical analysis that our proposed  policy has the capability to achieve
the optimal performance asymptotically.
    \item We validate our theoretical results with real data. Simulation results show that  our proposed edge caching policy can predict the content popularity with high precision, and track the real local popularity {in real time} with spatial and temporal popularity dynamic. Simulation results also
    show the superior performance of our proposed  policy in comparison with the other traditional policies and verify its asymptotical optimality.
\end{enumerate}

The rest of this paper is organized as follows. In section II, the system
model
is described.
Our proposed edge caching policy including  online content popularity prediction  and offline user preference learning is presented in Section III.
{The two learning based edge caching architectures  are described in Section IV.
The performance analysis is provided in Section V.}
Simulation results are shown in Section VI.
Final conclusions are drawn in Section VII.

\section{System Model}

\subsection{Edge Caching Scenarios in F-RAN}

We consider the edge caching scenarios in F-RAN as illustrated in Fig. \ref{fig1}.
Large amounts of F-APs with limited storage capacity are deployed in different scenarios, for example, a town with  dense crowds and moderate user mobility, a park with relatively dense crowds and high user mobility,
or a stadium with ultra-dense crowds and relatively low user mobility.
Due to different user density and user mobility in different scenarios, there exist differences in the distribution of {popular} contents among different scenarios. In order to ensure a stronger scenario suitability of the caching policy,
we propose the following edge caching design rule:
The F-APs deployed in different scenarios should regularly monitor the users in consideration of user mobility, and set a  monitoring cycle matched with the characteristics of different scenarios.

\begin{figure}[!t]
\centering 
\includegraphics[width=0.36 \textwidth]{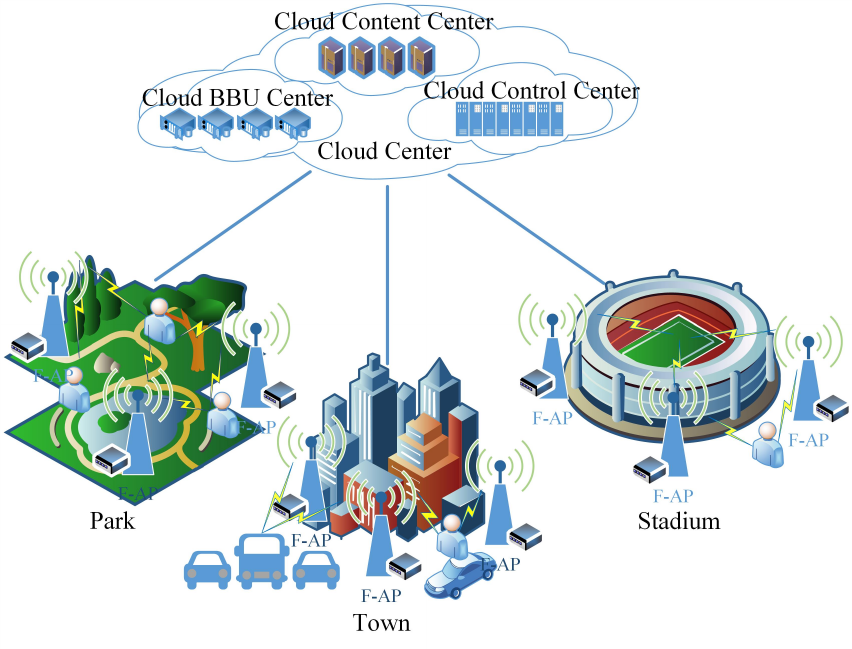}
\caption{Illustration of the edge cache scenarios in F-RAN.}
\label{fig1}
\end{figure}

In each edge caching scenario, according to the corresponding design criteria (for example, location), the F-APs with close distance can cooperate and belong to the same region, which are also called regional F-APs.
The users in the same region, also called regional users, can request  contents of interest from
their associated regional F-APs if the contents are stored in the  local caches
or from neighboring F-APs in the other regions or the cloud content center through  fronthaul links otherwise. 
Then, we {focus} on the caching policy for a single region, namely the regional caching.

\subsection{Edge Caching Problem Formulation}
We consider the edge caching problem in a specific region served by ${M}$ F-APs, which  constitute an F-AP set ${\cal M} = \left\{ {1,2,\cdots,m,\cdots,M} \right\}$.
It is assumed that the regional users  can  fetch the contents of interest from the  local caches of the $ M$ F-APs.
Without loss of generality, we assume that all the  contents have the same size\footnote{{Note that contents with different sizes can always be split into data segments of the same size, and each data segment can then be considered as a ``content". This is a common practice in real world systems. Here we follow the same assumption and justification as in \cite{CMAB} and \cite{SLi}.}} and each F-AP has the same storage capacity and can store up to $\varphi $ contents from the content library ${\cal F} = \left\{ {1,2,\cdots,f,\cdots,F} \right\}$,
which may vary over time and is located in the cloud content center.
Without loss of generality, assume {$M \varphi \ll F $}.
By considering user mobility
and according to the preset monitoring period,
the F-APs
monitor the users in the specific region 
during discrete time periods $t = 1,2,\cdots,T$, where $T$ is set to be a finite time horizon.
Let ${U_t}$  denote {the number of regional users} served by the  ${M}$ F-APs during the $t$th time period with {${U_{\min }} \le {U_t} \le {U_{\max }}$,}
and ${{\cal U}_t} = \left\{ {1,2,\cdots,{U_t}} \right\}$  the  set of regional users monitored by the  ${M}$ F-APs.
It is assumed that the regional users remain unchanged during the considered time period. 
In the way of cache content placement, for description convenience, we adopt the partition-based content placement method in  \cite{copCaching}, where
each content is separated into ${M}$ equal-sized subfiles, and the F-APs 
store its different subfiles.\footnote{{Note that it does incur the cost for transferring data among the regional F-APs.}}
\footnote{{The $M$ regional F-APs can indeed be treated as a single cache by adopting the simple partition-based content placement method.
In this regard, the intelligent caching policies on single cache
such as \cite{TL,Bharath,CMAB,Trend} can be applied here, where transfer learning, contextual multi-arm bandit, and trend-aware online learning were employed, respectively. However, just as stated previously, these intelligent caching policies have made strict assumptions and ignored to consider spatial dynamic of content popularity.}}
\footnote{{On the other hand, however, the considered
$M$ regional F-APs are not simply treated as a single cache. In comparison with the single-cache setting, the cooperative-cache setting here has the following advantages: 1) By separating each content into $M$ equal-sized subfiles and exploiting cooperation, the transmission delay can be decreased. Specifically, for video content, users may just want to watch part of it. Correspondingly, only some of the subfiles will need to be transmitted. Therefore, the transmission delay can be further decreased.
2) Furthermore, the coverage area of multiple F-APs and the number of their served users are relatively large.  Correspondingly, popular contents can be more ``concentrated" (i.e. in the the coverage area). Therefore, a higher cache hit rate can be achieved.
3) Besides, more contents can be cached in the cooperative F-APs in a specific region, and unnecessary caching redundancy can be avoided.}}
We remark here that more sophisticated  content placement methods can be adopted.


Let ${D_t}$ denote {the number of requests} during the $t$th time period,  $ D = \sum\nolimits_{t = 1}^T {{D_t}} $ the overall number of requests in the finite time horizon ${T}$,
and $ {req}_t = \left\{ {req_{t,1},re{q_{t,2}},\cdots,re{q_{t,d}},\cdots,re{q_{t,{D_t}}}} \right\} $ the set of requests which come in sequence during the $t$th time period.
The request $re{q_{t,d}}$ can be further expressed as
$re{q_{t,d}} = \left\langle {f\left( d \right),t\left( d \right), \boldsymbol x\left( d \right)} \right\rangle ,
\forall 1 \le t \le {T}, \forall 1 \le d \le {D_t},$
where $ f\left( d \right) \in {\cal F}$ denotes the requested content, $ t\left( d \right)$ the requesting time, {and} $ {\boldsymbol{x}}\left( d \right) \in {\mathbb R^N} $ the feature vector with dimension $N$
describing the features of the requested content.
Take movie as an example: $ {\boldsymbol{x}}\left( d \right)$ may include features like the  movie rating, the movie type, the keyword frequency of movie critics,
etc.
Without loss of generality, we normalize the various dimensions of ${\boldsymbol{x}(d)} $ and set ${\boldsymbol{x}(d)} \in {\left[ {0,1} \right]^N} $. 

During the $t$th time period, for each arriving request $re{q_{t,d}}$, the regional F-APs first check whether
$f(d)$ has been stored locally.
Let ${\theta _{t,d}}\left( {f\left( d \right)} \right) \in \left\{ {0,1} \right\}$ denote the cache status of  $ f\left( d \right)$ at the requesting time $t(d)$,
where ${\theta _{t,d}}\left( {f\left( d \right)} \right) = 1$ if ${f\left( d \right)}$ is stored locally, 
and ${\theta _{t,d}}\left( {f\left( d \right)} \right) = 0$ otherwise. If  ${f\left( d \right)}$ has been stored in the local caches, a cache hit happens and the requesting user can then be served locally.
Otherwise, a cache miss happens and ${f\left( d \right)}$ will be fetched from neighboring F-APs in the other regions or the cloud content center, and a caching decision will be further made to determine whether to store ${f\left( d \right)}$ locally.
If the F-APs decide to store $f(d)$ and replace one of the existing contents in the local caches, denoted by ${f_{\rm old}} \in \left\{ {f | {\theta _{t,d}}\left( f \right) = 1, \forall f \in {\cal F}} \right\}$, a cache update happens.
Then, the cache status will be updated 
according to the following rule
\begin{align}
{\theta _{t,d + 1}}\left( f \right) = \left\{ {\begin{array}{*{20}{l}}
0,&{{\text{if}} \ f = {f_{{\rm{old}}}}},\\
1,&{{\text{else if}} \ f = f\left( d \right)},\\
{{\theta _{t,d}}\left( f \right)},&{{\text{else if}} \ f \in {\cal F}\backslash \left\{ {f\left( d \right),{f_{{\rm{old}}}}} \right\}}.
\end{array}} \right.
\end{align}
In addition,
{a caching decision that $f(d)$ will not be stored may be made, and then the cache status will remain unchanged,}
i.e., ${\theta _{t,d + 1}}\left( f \right) = {\theta _{t,d}}\left( f \right),\forall f \in {\cal F}$.

For convenience, we use ${{\boldsymbol{\theta }}_{t,d}} = \left[ {\theta _{t,d}}\left( 1 \right),  {\theta _{t,d}}\left( 2 \right),\cdots,{\theta _{t,d}}\left( f \right),\cdots, {\theta _{t,d}}\left( F \right) \right]^{{T}}$ to indicate the cache status of all the contents
at the requesting time $t(d)$.
Generally, an edge caching policy can be represented by a function ${\Phi}:\left(  {{\boldsymbol{\theta }}_{t,d}} , {\boldsymbol{x}}\left( d \right) , {{\cal U}_t} \right) \mapsto {{{\boldsymbol{\theta }}_{t,d + 1}}} $, which maps the current cache status vector, the current feature vector, and the  set of regional users  to the next cache status vector.
After a request ${re{q_{t,d}}}$ is served, the cache status vector should be updated according to the edge caching policy.
We use the overall cache hit rate to evaluate the caching performance,
which is defined as the number of cache hits over the whole requests during the finite time horizon $T$ as follows
\begin{align}\label{eq_0}
\mathcal H \left( \Phi  \right)
&= \frac{{\sum\nolimits_{t = 1}^T {\sum\nolimits_{d = 1}^{{D_t}} {{\theta _{t,d}}\left( {f\left( d \right)} \right)} } }}{{\sum\nolimits_{t = 1}^T {{D_t}} }} \nonumber\\
&= \frac{1}{D}\sum\limits_{t = 1}^T {\sum\limits_{d = 1}^{{D_t}} {{\theta _{t,d}}\left( {f\left( d \right)} \right)} } .
\end{align}
Then,
the corresponding edge caching optimization problem\footnote{{Note here that the focus of this paper is content popularity prediction and user preference learning, which is one of the most important parts of edge caching or F-RAN.
To highlight this issue, we simplify the content placement and adopt the partition-based content placement method and use the cache hit rate as the optimization objective, which help us concentrate on the investigation of content popularity prediction and user preference learning.
If more sophisticated content placement methods and other optimization objectives are considered, the corresponding optimization problem will have more F-RAN-specific parameters or communication-related parameters.
Actually, the investigation results concerning content popularity prediction in this paper can be readily extended to deal with non-simplified scenarios such as \cite{Jiang-icnc18, Jiang-vtc18fall, Jiang-globecom18} by using the predicted content popularity rather than the assumed Zipf distribution.}} can be expressed as follows \cite{Jiang-conf}
\begin{align}\label{eq_1}
&{\mathop {\max }\limits_\Phi  }{\mathcal H  \left( \Phi  \right)},\\
 {{\rm{s}}.{\rm{t}}.\ } & {{\theta _{t,d}}\left( f \right) \in \left\{ {0,1} \right\}, \text{for} \ 1 \le d \le {D_t}, 1 \le t \le T, \forall f \in \cal F,} \nonumber \\
{}&    {\boldsymbol{\theta }}_{t,d}^T{{\boldsymbol{\theta }}_{t,d}} \le M\varphi ,  \text{for} \ 1 \le d \le {D_t}, 1 \le t \le T. \nonumber
\end{align}

Our objective in this paper is to find the optimal edge caching policy by maximizing the overall cache hit rate over the finite time horizon $T$ with the limited total cache size $M\varphi$.
For convenience, a summary of major notations
is presented in Table \ref{table1}.

\begin{table*}[!t]
\small
\caption{Summary of major notations.}\label{table1}
\begin{tabular}{|c|m{14cm}|}
\hline
\hline
$M$, ${\cal M}$ & Number of regional F-APs, set of the $M$  regional F-APs \\
\hline
${t}$, ${T}$ & Discrete time periods,  finite time horizon \\
\hline
$\varphi $ & Cache size of  each F-AP \\
\hline
${U_t}$, ${U_{\max }}$, ${U_{\min }}$, ${{\cal U}_t}$ & Number of regional users during the $t$th time period,
  maximum ${U_t}$,  minimum ${U_t}$, Set of the ${U_t}$ regional users \\
\hline
${D_t}$,  ${D}$ & Number of   requests during the $t$th time period,  number of overall requests in the finite time horizon $T$ \\
\hline
$re{q_t}$, $re{q_{t,d}}$ & Set of requests coming in sequence during the $t$th time period, the $d$th request during the $t$th time period\\
\hline
$f$, $f\left( d \right)$,  $f_{\rm{old}}$, {${f^{{\rm{smallest}}}}$} & Content, the $d$th requested content,  content to be removed, {content with the smallest popularity in the current local cache} \\
\hline
$t\left( d \right)$, {${t_f}$, ${t_{f^{\text{smallest}}}}$} & The $d$th requesting time, {initial caching time of the content $f $, initial caching time of the content $f^{\text{smallest}}$}  \\
\hline
{${{\cal G}_{t,d}}$, $P_{t,f}^{{\rm{cur}}}$, ${P^{{\rm{smallest}}}}$} & {Set of current contents in the local cache at the requesting time $t\left( d \right)$, current popularity of the caching content $f \in {{\cal G}_{t,d}}$  after it is requested, the smallest content popularity} \\
\hline
{${Q_{t,d}}$} & {Priority queue that stores the caching contents and information sequentially} \\
\hline
${\cal F}$, $F$ & Content library, size of content library \\
\hline
${\boldsymbol{x}}\left( d \right)$, {${{\boldsymbol{x}}^{\left( k \right)}}$,} $N $& Feature vector of the $d$th requested content, {feature vector of the $k$th sample,} dimension of feature vector\\
\hline
${\theta _{t,d}}$, ${{\boldsymbol{\theta }}_{t,d}}$ & Cache status of the requested content $f(d)$ at the requesting time $t\left( d \right)$,
vector of  cache status of all the contents at the requesting time $t\left( d \right)$  \\
\hline
$\Phi $, {$\Phi ^*$} & Edge caching policy, {the optimal edge caching policy} \\
\hline
{${{\cal H}_t}\left( {{\Phi }}\right)$, ${{\cal H}_t}\left( {{\Phi ^*}}\right)$} & {Cache hit rate of $\Phi$ during the $t$th time period, the optimal cache hit rate during the $t$th time period}\\
\hline
 ${\cal H}\left( \Phi  \right)$, {${{\cal H}}\left( {{\Phi ^*}} \right)$} & Overall cache hit rate of  $\Phi $ in the finite time horizon $T$, {the optimal overall cache hit rate  in the finite time horizon $T$} \\
\hline
{${{\boldsymbol{w}}_u}$, ${\boldsymbol{w}}_u^{\left( k \right)}$, ${{\boldsymbol{w}}_u^*}$} & {Vector of user preference model parameters of the user $u$, vector of user preference model parameters of the user $u$ for the $k$th iteration, vector of the optimal user preference model parameters} \\
\hline
{${{\hat p}_{t,u,d}}$, ${{ p}_{t,u,d}}$}  & {Predicted possibility that the user  $u $ requests  ${f\left( d \right)}$ at $t(d)$ during the $t$th time period, real possibility that the user  $u $ requests  ${f\left( d \right)}$ at $t(d)$ during the $t$th time period} \\
\hline
{${{\hat P}_{t,d}}$,  $ P_{t,d}$, $\hat P_{t,d'}^{'}$, $P_{t,d'}^{'}$}   & {Predicted  popularity of $f\left( d \right)$ at $t\left( d \right)$,  real  popularity  of $f\left( d \right)$ at $t\left( d \right)$,  predicted   popularity with respect to $ P_{t,d'}^{'}$, popularity of the ${d'}$th most popular content when it is requested firstly during the $t$th time period}\\
\hline
{$y\left( d \right)$, ${y^{\left( k \right)}}$} & {Category label of the $d$th requested content, category label of the $k$th sample}\\
\hline
{${K_{t,u,d}}$, $K$} & {Cumulative number of samples for the user $u$ from the last model update to the time when the request $re{q_{t,d}}$ arrives,  number of collected samples for offline user preference learning} \\
\hline
{$\ell \left( {{\boldsymbol{w}_u},{\boldsymbol{x}(d)},y(d)} \right)$, ${\xi _{t,u,d}}$} & {Logistic loss  for the user $u$ at $t\left( d \right)$, average logistic loss for the user $u$ at $t\left( d \right)$} \\
\hline
{$\gamma $, ${{\eta ^{\left( k \right)}}}$, $\sigma^{(k')}$} & {Predefined threshold, non-increasing learning-rate schedule, parameter that has certain relationship with $\eta^{(k)}$} \\
\hline
{${{\boldsymbol{g}}^{\left( k \right)}}$, ${{\bf{g}}^{\left( {1:k} \right)}}$} & {Gradient vector of the logistic loss of the $k$th sample with
respect to ${{\boldsymbol{w}}_u}$, sum of the gradient vectors of the logistic loss of the previous $k$ samples}\\
\hline
{$\lambda_1$, $\lambda_2$; $\alpha, \beta$; $\tau_u$} & {Regularization parameters with positive values; adjusting parameters; sufficiently small constant with a positive value} \\
\hline
{$\mathcal{E} $, ${{L_d}}\left( {{{\boldsymbol{w}}_u}} \right)$} & {Convex set of the optimization problem in \eqref{eq11}, a sequence of convex loss functions} \\
\hline
{$F_t$} & {number of requested different contents during the $t$th time period} \\
\hline
{$R\left( D \right)$} & {Regret of the overall cache hit rate for the total $D$ requests in the finite time horizon $T$} \\
\hline
\hline
\end{tabular}
\end{table*}

\section{The Proposed User Preference Learning Based Edge Caching Policy}

In order to maximize the cache hit rate,
we propose a novel edge caching policy which includes
an online content popularity prediction algorithm and an offline user preference learning algorithm.
The proposed policy can  continuously cache popular contents based on the content features and  user preferences.\footnote{{Note that the adaptive caching scheme recently proposed in \cite{Tanzil} also considers content popularity prediction based on content features and users' behavior. However, it uses an extreme-learning machine (ELM) neural network to estimate the content popularity, and more addresses the optimization of the number of neurons, the construction and selection of content features. Besides, its definition of content popularity is also different with ours.}}

\subsection{Policy Description}

The detailed edge caching policy is shown in Algorithm 1.
The considered $M$  F-APs serving in the region set a fixed monitoring period
and periodically monitor the current user set in the region.
During the $t$th time period, the $M$ F-APs first obtain the current user set ${{\cal U}_t}$.
For each arriving request $re{q_{t,d}}$ from the user $u \in {\cal U}_t$,  its request information will then be recorded.
Meanwhile,  the features of  $f\left( d \right)$ are extracted and recorded. 
The recorded data will be used to train or update the user preference model in order to improve the prediction precision of the content popularity.
Let ${{\cal G}_{t,d}} = \left\{ {f | {\theta _{t,d}}\left( f \right) = 1, \forall f \in {\cal F}} \right\}$ denote the set of current contents in the local cache.
It will be explored to see whether $f\left( d \right)$ has already been stored locally.
Just as stated in the previous section,
the corresponding caching decision will be made to determine whether $f\left( d \right)$ should be cached and which stored content should be removed from the storage space of the $M$ F-APs when $f\left( d \right)$ needs to be cached.

\begin{algorithm}[!t]
\small
	\caption{The proposed edge caching policy}
	\label{alg:1}
	\begin{algorithmic}[1]
        \Procedure{EdgeCaching}{$req_{t,d}$}
		\For{$t = 1,2,\cdots,T$, }	
		\State The considered $M$ F-APs monitor  ${{\cal U}_t}$;
        \For{$d = 1,2,\cdots,D_t$, }
		\State Record the request information  of  $ re{q_{t,d}}$;
         Read the set  of the current caching content
         ${{\cal G}_{t,d}}$;
		\If {$f\left( d \right)$ has been stored locally,} 
		\State The users are served locally;
         $ P_{t,{f\left( d \right)}}^{\text{cur}} = P_{t,{f\left( d \right)}}^{\text{cur}} - {1 \mathord{\left/ {\vphantom {1 {{U_t}}}} \right. \kern-\nulldelimiterspace} {{U_t}}}$;
        \Else
        \State Fetch $f\left( d \right)$ from the cloud content center
         or the neighboring F-APs in the other regions;
         \State Extract ${\boldsymbol {x}}\left( d \right)$ and   $\left\{{\boldsymbol {w}}_u | \forall u \in {\cal U}_t \right\}$;
         ${\widehat P_{t,d}} \leftarrow {\rm{Predict}}\left({\boldsymbol{x}}(d), \left\{ {{{\boldsymbol{w}}_u}} | \forall u \in {\cal U}_t  \right\} \right) $;
        \State Sort ${Q_{t,d}}$ based on $ P_{t,f}^{\text{cur}}$ and  ${t_f}$ for $f \in {{\cal G}_{t,d}}$;
         Get {$P^{\text{smallest}}$, ${f^{\text{smallest}}}$} from the top  of ${Q_{t,d}}$;
         \If {${\widehat P_{t,d}} > {{ P^{\text{smallest}}}}$,}\Comment{Cache update}
        \State Remove the top element from ${Q_{t,d}}$;
         ${t_{f\left( d \right)}} = t\left( d \right)$; $ P_{t,{f\left( d \right)}}^{\text{cur}} = {\widehat P_{t,d}} - {1 \mathord{\left/ {\vphantom {1 {{U_t}}}} \right. \kern-\nulldelimiterspace} {{U_t}}}$;
        \State Insert $\langle { P_{t,{f\left( d \right)}}^{\text{cur}},f\left( d \right), t_{f\left( d \right)}} \rangle $ into ${Q_{t,d}}$;
         Replace {${f^{\text{smallest}}}$} by $f\left( d \right)$.
        \EndIf
        \EndIf
        \EndFor
        \EndFor
        \EndProcedure
	\end{algorithmic}
\end{algorithm}

In order to make an optimal caching decision, the feature vector ${\boldsymbol{x}}\left( d \right)$
and the vectors of {the} well-trained user preference model parameters of all the users  $\left\{ {{\boldsymbol{w}}_u} | \forall u \in {\cal U}_t \right\}$ are extracted 
to predict the popularity ${ P_{t,d}}$ of the requested content $f\left( d \right)$.
In addition, considering that the content popularity will change over time, in order to track popularity changes,
we let $  P_{t,f}^{\text{cur}}$  denote  the current popularity of the caching content $f \in {{\cal G}_{t,d}}$  after it is requested  (also called the residual request rate).
We know that users may have a certain request delay on the same content.
In order to ensure timely and reasonable {cache update}, we propose to select the content with the characteristics of the smallest content popularity {${ P^{\text{smallest}}}$} and relatively earlier initial cache time {${t_{f^{\text{smallest}}}}$},\footnote{{Note that here we propose to select the content with the smallest content popularity in the local caches. Furthermore, if there exist multiple contents that have the same smallest content popularity in the local caches, we propose to select the content with the smallest content popularity that has the longest cache time, i.e., the content that has been cached earliest.}} denoted by {${f^{\text{smallest}}}$}, as the content to be removed from the local caches. 
In order to  locate {${f^{\text{smallest}}}$} quickly, we propose to reserve a priority queue ${Q_{t,d}}$ that stores the caching contents along with their current content popularity $ P_{t,f}^{\text{cur}}$ and their initial caching time ${t_f}$ for $f \in {{\cal G}_{t,d}}$.
The elements of ${Q_{t,d}}$ are sorted
in sequence when the request $req_{t,d}$ arrives, whose top element is composed of 
{${f^{\text{smallest}}}$, 
${t_{f^{\text{smallest}}}}$} and 
{${ P^{\text{smallest}}}$.}
A caching decision is made by comparing the predictive popularity  ${\widehat P_{t,d}}$ and {${ P^{\text{smallest}}}$.}
If ${\widehat P_{t,d}}$ is larger than {${ P^{\text{smallest}}}$,} the existing content {${f^{\text{smallest}}}$} will be  replaced by $f\left( d \right)$, the initial caching time of $f\left( d \right)$ will be recorded, and the current popularity of  $f\left( d \right)$ and the priority queue ${Q_{t,d}}$ will be updated  accordingly. After that, a cache update process is completed. Otherwise, nothing will be done to the local caches.
{The key here, obviously, is to obtain $\widehat P_{t,d}$, which will be described in the next subsection.}

\vspace*{-10pt}
\subsection{Online Content Popularity Prediction}

In this subsection, we  propose an online content popularity prediction algorithm based on the content features and user preferences.
During the $t$th time period, for each requesting user $u \in {\cal U}_t$, the requested content can be classified into a {preferred} category or an {unpreferred}  one for this user based on its user preference.
Generally,  a user  prefers to request  contents of its {preferred} category.
The problem of whether a user will request a certain content can be converted into a simple two-category one.
We  use the sigmoid function to approximate the correspondence between the  feature vector and the category label of the requested content \cite{sigmoid},  and construct a logistic regression model to approximate the user preference model.
For the arriving request $re{q_{t,d}}$, it is characterized by the feature vector ${\boldsymbol{x}}\left( d \right)$.
Let $y(d) $ denote the corresponding category label with $y(d)=1$ if the requested content is the {preferred} category of the user and $y(d)=0$ otherwise.
Let ${ p_{t,u,d}}$ denote the possibility that the user  $u \in {\cal U}_t$ requests the content ${f\left( d \right)}$ at the requesting time $t(d)$ during the $t$th time period.
Specifically, we assume that a user will not have a second request to the same content\footnote{Note that the above assumption is for all the contents and has a fair impact on the content popularity.
By using this assumption, neither the {popular} contents can become  unpopular ones nor the unpopular contents can become the {popular} ones.
Furthermore, in some ways, the content which a user has requested repeatedly can be obtained directly from this user's own cache and does not need to be obtained repeatedly from the corresponding F-APs.
}.
If the  user $u $ has already requested ${f\left( d \right)}$ previously,
then ${\widehat p_{t,u,d}} = 0$.
Otherwise, ${ p_{t,u,d}}$ can be predicted based on  the following user preference model
\begin{equation}
{{\widehat p}_{t,u,d}} = {p_{\boldsymbol w_u}}[ {y\left( d \right) = 1\left| {{\boldsymbol{x}}\left( d \right)} \right.} ]
= \frac{1} {{{1 + {e^{ - ( {{{\boldsymbol{w}}_u^T}*{\boldsymbol{x}}\left( d \right)} )}}}}}.
\end{equation}
Furthermore, the regional content popularity $ P_{t,d}$  can be  predicted by using  the average request possibility from the regional users for  $f\left( d \right)$ as follows
\begin{align}\label{eq5}
{\widehat P_{t,d}} = \frac{1}{{{U_t}}} {{\sum\limits_{u = 1}^{{U_t}} {{{\widehat p}_{t,u,d}}} }}.
\end{align}
After that, 
if the requested content is determined to be stored into the local cache or it has been stored locally and is requested again, according to our previous assumption that a user will not have a second request to the same content, its current popularity at the requesting time can be calculated respectively as  follows
\begin{equation}
P_{t,{f\left( d \right)}}^{\text{cur}} = {\widehat P_{t,d}} - {1}/{U_t}, \
P_{t,{f\left( d \right)}}^{\text{cur}}  = P_{t,{f\left( d \right)}}^{\text{cur}} - {1}/{U_t}.
\end{equation}
Besides, if the requested content is determined not to be stored locally and is requested by another regional user again, its current popularity at the next requesting time can be predicted by \eqref{eq5} since we have set the possibility that the previous requesting user will request this content again to zero. In this way, our proposed online content popularity algorithm can track the popularity changes in real time.

To measure the prediction performance, we introduce the logistic loss $\ell \left( {{\boldsymbol{w}_u},{\boldsymbol{x}(d)},y(d)} \right)$ for the user $u$, which is defined as the negative log-likelihood of $y(d)$ given ${p_{\boldsymbol{w}_u}}( {\left. y(d) \right|{\boldsymbol{x}(d)}} )$ and can be expressed as follows
\begin{multline}
\ell ( {{\boldsymbol{w}_u},{\boldsymbol{x}(d)},y(d)} )
= -y(d)\log {p_{\boldsymbol{w}_u}}({\left. y(d) \right|{\boldsymbol{x}(d)}} ) \\
- \left( {1 - y(d)} \right)\log [ {1 - {p_{\boldsymbol{w}_u}}( {\left. y(d) \right|{\boldsymbol{x}(d)}} )} ].
\end{multline}
Since user preference may change over time, we need to  capture the moment when the user preference changes and {update the user preference model in real time}. 
For this purpose, we collect the samples $\{({\boldsymbol{x}}^{(k)}, y^{(k)})\}_{k=1}^{K_{t,u,d}}$ and  monitor the prediction performance in real time,
where ${{{\boldsymbol{x}}^{\left( k \right)}}}$ and ${{y^{\left( k \right)}}}$
denote the feature vector and   category label of the $k$th sample, respectively,
and ${{K_{t,u,d}}}$   the cumulative number of  samples  for  the user $u$ from the last model update to the time when the request $re{q_{t,d}}$ arrives during the $t$th time  period. 
Then, the average logistic loss
for  the user $u $ 
can be expressed  as follows
\begin{equation}
{\xi _{t,u,d}} = \frac{1}{{{K_{t,u,d}}}}\sum\limits_{k = 1}^{{K_{t,u,d}}} {\ell ( {{{\boldsymbol{w}}_u},{{\boldsymbol{x}}^{\left( k \right)}},{y^{\left( k \right)}}} )}.
\end{equation}
Let $\gamma $ denote a predefined threshold with $0 \le \gamma  \le 1$. 
When ${\xi _{t,u,d}}$ exceeds $\gamma $, the update of the user preference model  will be initiated.

\begin{algorithm}[!t]
\small
	\caption{The online content popularity prediction algorithm}
	\label{alg:2}
	\begin{algorithmic}[1]
        \Procedure  {Predict}{${\boldsymbol{x}}(d), \left\{ {{{\boldsymbol{w}}_u}} | \forall u \in {\cal U}_t  \right\} $} 		\For{$u$ $\in {{\cal U}_t} $, }
        \If {The user $u$ has requested the content,} 
        \State ${\widehat p_{t,u,d}} = 0$;
        \Else
		\State Obtain the well-trained ${{\boldsymbol{w}}_u}$ from the $M$ F-APs;
		 ${\widehat p_{t,u,d}} = {1}/({1 + {e^{ - ( {{{\boldsymbol{w}}_u^T}{\boldsymbol{*x}}\left( d \right)} )}}})$;
        Observe the category label ${y(d)}$ of  ${f\left( d \right)}$.
        \EndIf
        \State Get  $( {{{\boldsymbol{x}}^{( {K_{t,u,d}} )}},{y^{({K_{t,u,d}})}}} )$;
         ${K_{t,u,d}} = {K_{t,u,d - 1}} + 1$;
         ${\xi _{t,u,d}} = {
        {{{\xi _{t,u,d - 1}} + \frac{1}{ K_{t,u,d}} \ell ( {{{\boldsymbol{w}}_u},{{\boldsymbol{x}}^{( {K_{t,u,d}} )}},{y^{( {K_{t,u,d}} )}}} )}}} $;

        \If {${\xi _{t,u,d}} \ge \gamma $, }
		\State $\boldsymbol{w}_u \leftarrow$  Learn$( {\{ {( {{{\boldsymbol{x}}^{\left( k \right)}},{y^{\left( k \right)}}} )} \}_{k = 1}^{{K_{t,u,d}}}} )$.
        \EndIf
        \EndFor
        \State \textbf{return} $\frac{1}{{{U_t}}} {{\sum\nolimits_{u = 1}^{{U_t}} {{{\widehat p}_{t,u,d}}} }}$.
        \EndProcedure
	\end{algorithmic}
\end{algorithm}

The detailed description of our proposed online content popularity prediction algorithm is shown in Algorithm 2.
Note that our proposed algorithm not only  can predict  content popularity in an  online fashion, {but also} can determine when to update the user preference model proactively.
We  also remark  here that the  time complexity of our proposed algorithm when making prediction for one content is ${\mathcal O}\left( U_t \right)$.

\subsection{Offline User Preference Learning}

With the rapid development of social networks and various multimedia applications, each user will access a large amount of contents everyday. In order to provide users with more intelligent services and enhance the quality of experience,
it is necessary and feasible to establish the independent user preference model for each user.
When the update of the user preference model  is initiated, we assume that there are $K$ samples
collected for the considered user and extracted from the recorded data,
which are denoted by $ \left\{ {\left( {{{\boldsymbol{x}}^{\left( k \right)}},{y^{\left( k \right)}}} \right)} \right\}_{k = 1}^K$.
Based on the collected samples, we propose to learn the 
user preference model parameters iteratively  by minimizing
the logistic loss of each sample as follows
\begin{equation}\label{eq-10}
{\boldsymbol{w}}_u^{\left( {k + 1} \right)} = \mathop {\arg \min }\limits_{{{\boldsymbol{w}}_u}} \left( {\ell \left( {{{\boldsymbol{w}}_u},{{\boldsymbol{x}}^{\left( k \right)}},{y^{\left( k \right)}}} \right)} \right), \
k = 1,2,\cdots, K,
\end{equation}
where $\boldsymbol{w}_u^{(k+1)}$ denotes the vector of the learned user preference model parameters of the $k$th iteration for the user $u$.
By using {the online gradient descent (OGD) method}  \cite{OGD}, the solution of the above optimization problem can be obtained through the iteratively updated model parameters as follows
\begin{align}\label{eq-18}
{\boldsymbol{w}}_u^{\left( {k + 1} \right)} = {\boldsymbol{w}}_u^{\left( k \right)} - {\eta ^{\left( k \right)}}{{\boldsymbol{g}}^{\left( k \right)}}, \ k=1,2,\cdots, K,
\end{align}
where ${{\boldsymbol{g}}^{\left( k \right)}}=\nabla _{\boldsymbol{w}_u} \ell \left( {{{\boldsymbol{w}}_u},{{\boldsymbol{x}}^{\left( k \right)}},{y^{\left( k \right)}}} \right)=\left[ {{p_{\boldsymbol{w}_u}}\left( {\left. {{y^{\left( k \right)}}} \right|{{\boldsymbol{x}}^{\left( k \right)}}} \right) - {y^{\left( k \right)}}} \right]{{\boldsymbol{x}}^{\left( k \right)}}$ denotes the gradient vector of the logistic loss of the $k$th sample with respect to $\boldsymbol{w}_u$,
and ${{\eta ^{\left( k \right)}}}$  a non-increasing learning-rate schedule with  $\sigma ^{(k')}$ satisfying $\sum\nolimits_{k' = 1}^k {{\sigma ^{\left( k' \right)}}}  = {1}/{{{\eta ^{\left( k \right)}}}}$.
Then,
{we have the following theorem.
\begin{theorem}\label{th0}
The   solution of the optimization problem in \eqref{eq-10} can also be obtained by solving the following equivalent optimization problem
\begin{multline}\label{eqeq-10}
 {\boldsymbol{w}}_u^{\left( {k + 1} \right)} =
 \mathop {\arg \min }\limits_{{{\boldsymbol{w}}_u}} \left( {{\left( {{{\boldsymbol{g}}^{\left( {1:k} \right)}}} \right)}^{\rm T}}  {{\boldsymbol{w}}_u} \right.\\
 \left. + \frac{1}{2}\sum\limits_{k' = 1}^k {{\sigma ^{\left(k' \right)}}\left\| {{{\boldsymbol{w}}_u} - {\boldsymbol{w}}_u^{\left( k' \right)}} \right\|_2^2}  \right), \
 k = 1,2, \cdots ,K,
\end{multline}
where ${{\boldsymbol{g}}^{\left( {1:k} \right)}} = \sum\nolimits_{k' = 1}^k {{{\boldsymbol{g}}^{\left( k' \right)}}} $ denotes the sum of the gradient  vector of the logistic loss of the previous $k$ samples.
\end{theorem}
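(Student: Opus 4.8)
The plan is to prove the equivalence by showing that the unique minimizer of the FTRL-Proximal objective in \eqref{eqeq-10} satisfies exactly the OGD recursion in \eqref{eq-18}; since \eqref{eq-18} is already stated to solve \eqref{eq-10}, the equivalence of the two problems follows immediately. First I would observe that the objective in \eqref{eqeq-10} is a strongly convex quadratic in ${\boldsymbol{w}}_u$: the linearized term $({\boldsymbol{g}}^{(1:k)})^{\rm T}{\boldsymbol{w}}_u$ contributes nothing to the curvature, whereas the proximal regularizer contributes a positive-definite Hessian with eigenvalue $\sum_{k'=1}^k \sigma^{(k')} > 0$. Hence the minimizer is unique and is fully characterized by the stationarity condition obtained by setting the gradient to zero, the gradients ${\boldsymbol{g}}^{(k')}$ and proximal centers ${\boldsymbol{w}}_u^{(k')}$ being treated as fixed data in this subproblem.

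Next I would write out the first-order optimality condition for the minimizer ${\boldsymbol{w}}_u^{(k+1)}$, namely ${\boldsymbol{g}}^{(1:k)} + \sum_{k'=1}^k \sigma^{(k')}\bigl({\boldsymbol{w}}_u^{(k+1)} - {\boldsymbol{w}}_u^{(k')}\bigr) = {\boldsymbol{0}}$. The central step is then to compare this condition at two consecutive indices: I would write the analogous identity for ${\boldsymbol{w}}_u^{(k)}$ and subtract the two. The accumulated-gradient term telescopes to the single increment ${\boldsymbol{g}}^{(k)}$ via ${\boldsymbol{g}}^{(1:k)} - {\boldsymbol{g}}^{(1:k-1)} = {\boldsymbol{g}}^{(k)}$, while the historical regularization terms in $\sum_{k'} \sigma^{(k')}{\boldsymbol{w}}_u^{(k')}$ cancel against one another, the only surviving coefficient collapsing through $\sigma^{(k)} = \sum_{k'=1}^k \sigma^{(k')} - \sum_{k'=1}^{k-1} \sigma^{(k')}$. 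This leaves the clean one-step relation $\bigl(\sum_{k'=1}^k \sigma^{(k')}\bigr)\bigl({\boldsymbol{w}}_u^{(k+1)} - {\boldsymbol{w}}_u^{(k)}\bigr) = -{\boldsymbol{g}}^{(k)}$, and invoking the stated schedule $\sum_{k'=1}^k \sigma^{(k')} = 1/\eta^{(k)}$ yields precisely ${\boldsymbol{w}}_u^{(k+1)} = {\boldsymbol{w}}_u^{(k)} - \eta^{(k)}{\boldsymbol{g}}^{(k)}$, which is \eqref{eq-18}. I would dispose of the base case $k=1$ separately: the single term gives ${\boldsymbol{w}}_u^{(2)} = {\boldsymbol{w}}_u^{(1)} - \eta^{(1)}{\boldsymbol{g}}^{(1)}$ directly from $\sigma^{(1)} = 1/\eta^{(1)}$, anchoring an induction on $k$.

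I expect the main obstacle to be bookkeeping in the telescoping step rather than any deep argument: one must verify carefully that subtracting the two stationarity conditions eliminates every historical center ${\boldsymbol{w}}_u^{(k')}$ with $k' < k$ and leaves only the memoryless recursion, which hinges entirely on the observation that $\sigma^{(k)}$ is the increment of the partial sums $1/\eta^{(k)}$, i.e. the defining property of the proximal weights. A subtlety worth flagging is that the claimed equivalence is an equivalence of the generated iterate trajectories: although \eqref{eqeq-10} accumulates all past gradients and proximal centers, its argmin at each step reproduces the memoryless OGD step, so both procedures trace out the identical sequence ${\boldsymbol{w}}_u^{(1)}, {\boldsymbol{w}}_u^{(2)}, \dots, {\boldsymbol{w}}_u^{(K+1)}$ and therefore return the same final solution to \eqref{eq-10}, completing the proof of Theorem~\ref{th0}.
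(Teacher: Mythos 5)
Your proposal is correct and follows essentially the same route as the paper's own proof: write the first-order stationarity condition for the strongly convex objective in \eqref{eqeq-10}, subtract the conditions at consecutive indices so the accumulated gradients and historical proximal centers telescope, and invoke $\sum_{k'=1}^{k}\sigma^{(k')} = 1/\eta^{(k)}$ (equivalently $\sigma^{(k)} = 1/\eta^{(k)} - 1/\eta^{(k-1)}$) to recover the OGD update \eqref{eq-18}. Your additional remarks on uniqueness via strong convexity and the explicit $k=1$ base case are harmless refinements of the same argument.
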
}
\begin{proof}
Please see Appendix A.
\end{proof}

Due to the sparse and unbalanced data and high-dimensional feature vector, there may exist over-fitting and high computational complexity problems \cite{Learning}.
Inspired by  {the follow the (proximally) regularized leader (FTRL-Proximal) method} \cite{FTRL}, which is an online optimization method based on the OGD method,
the $L1$-  and $L2$-regularization terms are introduced simultaneously in the optimization problem in \eqref{eqeq-10}
in order to avoid the above mentioned potential problems whereas obtain the optimal model parameters.
The introduction of the $L1$-regularization term is beneficial for realizing feature selection and producing sparse model, while the introduction of the $L2$-regularization term is conductive to the smooth solution of  the corresponding optimization problem.
Correspondingly, the model parameters can be updated according to the previous samples by solving the following optimization problem
\begin{multline}\label{eq11}
{{\boldsymbol{w}}_u^{\left( {k + 1} \right)}}
{\rm{ = }}\mathop {{\rm{argmin}}}\limits_{{{\boldsymbol{w}}_u}} \left\{
{( {{{\boldsymbol{g}}^{\left( {1:k} \right)}} - \sum\limits_{k' = 1}^k {{\sigma ^{\left( k' \right)}}{\boldsymbol{w}}_u^{\left( k' \right)}} } )^T  {{\boldsymbol{w}}_u}}
 \right. \\
 { + \frac{1}{2}( {{\lambda _2} + \sum\limits_{k' = 1}^k {{\sigma ^{\left( k' \right)}}} } )\| {{{\boldsymbol{w}}_u}} \|_2^2 } + {\lambda _1}{{\| {{{\boldsymbol{w}}_u}} \|}_1}    \\
\left. + \frac{1}{2}\sum\limits_{k' = 1}^k {{\sigma ^{\left( k'\right)}}\| {{\boldsymbol{w}}_u^{\left( k' \right)}} \|_2^2}\right\}, \  k = 1,2,\cdots, K,
\end{multline}
where ${\lambda _1}$ and $ {\lambda _2}$  denote the  regularization parameters with positive values, ${\|  \cdot  \|_1}$ and $\|  \cdot  \|_2^2$ the $L1$-norm and $L2$-norm, respectively.

It can be readily seen from \eqref{eq11} that the last item in the right hand side of \eqref{eq11}, i.e., ${1}/{2}\sum\nolimits_{k' = 1}^k {{\sigma ^{\left( k' \right)}}\|
{{\boldsymbol{w}}_u^{\left( k' \right)}} \|_2^2} $, is irrespective with ${\boldsymbol{w}_u}$.
Let
$
{{\boldsymbol{z}}^{\left( k \right)}} = {{\boldsymbol{g}}^{\left( {1:k} \right)}} - \sum\nolimits_{k' = 1}^k {{\sigma ^{\left( k' \right)}}{\boldsymbol{w}}_u^{\left( k' \right)}}.
$
Then, an iterative relationship between ${{\boldsymbol{z}}^{\left( k \right)}}$ and ${{\boldsymbol{z}}^{\left( k-1 \right)}}$ can be established as follows
\begin{equation}
{{\boldsymbol{z}}^{\left( k \right)}} = {{\boldsymbol{z}}^{\left( {k - 1} \right)}} + {{\boldsymbol{g}}^{\left( k \right)}} - \left( {\frac{1}{{{\eta ^{\left( k \right)}}}} - \frac{1}{{{\eta ^{\left( {k - 1} \right)}}}}} \right) {\boldsymbol{w}}_u^{\left( k \right)},
\end{equation}
which implies that we only need to store ${{\boldsymbol{z}}^{\left( k-1 \right)}}$ after using the last sample for learning.
Correspondingly, the  optimization problem in \eqref{eq11} can be further expressed  as follows
\begin{multline}\label{eq12}
{\boldsymbol{w}}_u^{\left( {k + 1} \right)} = \mathop {{\rm{argmin}}}\limits_{{{\boldsymbol{w}}_u}} \left\{
{({{\boldsymbol{z}}^{\left( k \right)}})^T  {{\boldsymbol{w}}_u} + {\lambda _1}{{\| {{{\boldsymbol{w}}_u}} \|}_1}} \right.\\
 \left.{ + \frac{1}{2}( {{\lambda _2} + \sum\limits_{k' = 1}^k {{\sigma ^{\left(k' \right)}}} } )\| {{{\boldsymbol{w}}_u}} \|_2^2}
 \right\}, \ k = 1,2,\cdots, K.
\end{multline}

We know that there exists a difference for the change rate of the weight of each feature dimension for the requested content, and the gradient value with respect to each feature dimension can reflect this change rate.
Therefore,  different learning rates are preferred for different feature dimensions.
Define
\begin{align}
\boldsymbol{g}^{(k)} &= [ {{g_{1}^{(k)}},{g_{2}^{(k)}},\cdots,{g_{n}^{(k)}},\cdots, {g_{N}^{(k)}}} ]^T,\\
\boldsymbol{z}^{(k)} &= [ {{z_{1}^{(k)}},{z_{2}^{(k)}},\cdots,{z_{n}^{(k)}},\cdots, {z_{N}^{(k)}}} ]^T, \\
{{\boldsymbol{w}}_u} &= [ {{w_{u,1}},{w_{u,2}},\cdots,{w_{u,n}},\cdots, {w_{u,N}}} ]^T,\\
{{\boldsymbol{w}}_u^{(k+1)}} &  = [ {{w_{u,1}^{(k+1)}},{w_{u,2}^{(k+1)}},\cdots,{w_{u,n}^{(k+1)}},\cdots, {w_{u,N}^{(k+1)}}} ]^T.
\end{align}
Let
$
\sum\nolimits_{k' = 1}^k {\sigma _n^{\left( k' \right)}}  = \frac{1}{{\eta _n^{\left( k \right)}}},
$
where
$\eta _n^{\left( k \right)} = {\alpha } /[{\beta  + \sqrt {\sum\nolimits_{k' = 1}^k {{{( {g_n^{( k' )}} )}^2}} } } ]  $
denotes the learning-rate schedule of the $n$th feature dimension with $\alpha$  and $\beta$ being the adjusting parameters which are  chosen to yield  good learning performance.
Then, the  optimization problem in \eqref{eq12} can be decoupled into the following $N$ independent scalar minimization problems
\begin{multline}\label{eq13}
{w_{u,n}^{(k+1)}} = \mathop {{\rm{argmin}}}\limits_{{w_{u,n}}  } \left\{ z_n^{\left( k \right)}  {w_{u,n}} + {\lambda _1}\left| {{w_{u,n}}} \right|  \right.\\
\left.+ \frac{1}{2}( {{\lambda _2} + \sum\limits_{k' = 1}^k {\sigma _n^{\left( k' \right)}} } )w_{u,n}^2  \right\}, \ n = 1, 2, \cdots, N.
\end{multline}

It can be easily verified that the optimization problem in \eqref{eq13} is an unconstrained non-smooth one, where the second item ${\lambda _1}\left| {{w_{u,n}}} \right|$ is non-differentiable at ${w_{u,n}} = 0$. Let $\eta  = \partial \left| {w_{u,n}} \right| \left| _{w_{u,n}= w_{u,n}^{(k+1)}} \right.$ denote the partial differential of $\left| {w_{u,n}} \right|$ at ${w_{u,n}^{\left( {k + 1} \right)}}$. Then, we have
\begin{align}\label{eq14}
\left\{ {\begin{array}{*{20}{l}}
{ { - 1 < \eta  < 1}, }&{{\rm{if }}\ {w_{u,n}^{(k+1)}} = 0,}\\
{ \eta = -1, }&{{\rm{else\ if }}\ {w_{u,n}^{(k+1)}} < 0,}\\
{ { \eta =  1,} }&\rm{  otherwise}.
\end{array}} \right.
\end{align}
Correspondingly, the optimal solution ${w_{u,n}^{(k+1)}}$ should  satisfy the following relationship
\begin{multline}\label{eq15}
{z_n^{\left( k \right)} + {\lambda _1}\eta  + ( {{\lambda _2} + \sum\limits_{k' = 1}^k {\sigma _n^{\left( k' \right)}} } )w_{u,n}^{\left( {k + 1} \right)} = 0}, \\
 n = 1, 2, \cdots, N.
\end{multline}
We have known previously that ${\lambda _1} > 0$.
Correspondingly, by classifying $z_n^{(k)}$ into three cases, i.e., $| {z_n^{\left( k \right)}} | < {\lambda _1}$, $z_n^{\left( k \right)} > {\lambda _1}$ and $z_n^{\left( k \right)} <  - {\lambda _1}$,
the closed-form solution of the optimization problem in \eqref{eq13} can be obtained from \eqref{eq15} as follows
\begin{multline}\label{eq16}
w_{u,n}^{\left( {k + 1} \right)} = \left\{ {\begin{array}{*{20}{l}}
0,&{{\rm{if}} \ | {z_n^{\left( k \right)}} | < {\lambda _1}},\\
{  \frac{{ {{\lambda _1}{\rm{sgn}}( {z_n^{\left( k \right)}} )- z_n^{\left( k \right)} } }}{{ {{\lambda _2} + \sum\nolimits_{k' = 1}^k {{\sigma _n ^{\left( k' \right)}}} } }}},&{{\rm{otherwise}}},
\end{array}} \right. \\
n = 1, 2, \cdots, N.
\end{multline}

\begin{algorithm}[!t]
\small
	\renewcommand{\algorithmicrequire}{\textbf{Input:}}
	\renewcommand{\algorithmicensure}{\textbf{Output:}}
	\caption{The offline user preference learning algorithm}
	\label{alg:2}
	\begin{algorithmic}[1]
		\Procedure {Learn}{$\left\{ {\left( {{{\boldsymbol{x}}^{\left( k \right)}},{y^{\left( k \right)}}} \right)} \right\}_{k = 1}^K$}
        \State Initialize: $\alpha$, $\beta$, ${\lambda _1}$, ${\lambda _2}$, ${\boldsymbol{w}}_u^{\left( 1 \right)} $, ${{\boldsymbol{z}}^{\left( 0 \right)}}= {\boldsymbol{q}}^{\left( 0 \right)} = \boldsymbol 0 \in {\mathbb{R}^N}$;
		\For{$k = 1,2,3,\cdots,K$, }
        \State ${{\boldsymbol{g}}^{\left( k \right)}} = {\left. {\nabla {\ell _{{{\boldsymbol{w}}_u}}}\left( {{{\boldsymbol{w}}_u},{{\boldsymbol{x}}^{\left( k \right)}},{y^{\left( k \right)}}} \right)} \right|_{{{\boldsymbol{w}}_u} = {\boldsymbol{w}}_u^{\left( k \right)}}}$;
        \For{$n = 1,2,3,\cdots,N$, }
		\State $\sigma _n^{\left( k \right)} = \frac{1}{\alpha }( {\sqrt {q_n^{\left( {k - 1} \right)} + {{( {g_n^{( k )}} )}^2}}  - \sqrt {q_n^{\left( {k - 1} \right)}} } )$;
        $z_n^{\left( k \right)} = z_n^{\left( {k - 1} \right)} + g_n^{\left( k \right)} - \sigma _n^{\left( k \right)}w_n^{\left( k \right)}$;
         $q_n^{\left( k \right)} = q_n^{\left( {k - 1} \right)} + { ({g_n^{\left( k \right)}})^2 }$;
		\State Caculate $w_{u,n}^{\left( {k + 1} \right)}$  according to \eqref{eq16} by setting
         $\sum\nolimits_{r = 1}^k {{\sigma _n ^{\left( r \right)}}}$  to  $({{\beta {\rm{ + }}\sqrt {q_n^{\left( k \right)}} }})/{\alpha }$.
        \EndFor
        \EndFor
        \State \textbf{return} ${\boldsymbol{w}}_u^{\left( {K + 1} \right)}$
        \EndProcedure
	\end{algorithmic}
\end{algorithm}

The entire user preference learning algorithm with the property of self-starting is described in Algorithm 3.
Note that our proposed  algorithm  only needs to store the last ${\boldsymbol{w}}_u^{\left( K \right)}$ and ${{\boldsymbol{z}}^{\left( K \right)}}$ which will be the initialized parameters for the next user preference model updating, and the previously recorded data can be cleared which is helpful to save storage space. Furthermore, the time complexity of our proposed algorithm for one user preference model updating is ${\mathcal O}\left( {KN} \right)$, which is not an issue due to its offline property.

We remark here that our proposed  policy can asymptotically  approach the optimal solution of the optimization problem
in \eqref{eq_1}
as the user requests increase, whose proof will be presented in Section V.

\section{The Proposed Learning based Edge Caching Architectures}

In this section, we propose two learning based edge caching architectures (as illustrated in Fig. \ref{fig2} and Fig. \ref{fig2-new}) which can implement the functionality  of our previously proposed edge caching policy.
In our proposed first architecture, by considering that not all UEs are equipped with artificial intelligence (AI) chipsets and support offline learning, both the online popularity prediction algorithm and the offline user preference learning algorithm are implemented inside the F-APs. In our proposed second architecture,  by considering future UEs equipped with AI chipsets supporting offline learning in smart wireless communications scenarios, the online popularity prediction algorithm is implemented inside the F-APs and the offline user preference learning algorithm is implemented inside the UEs. These two architectures will be presented in detail below.


\subsection{Learning Based Edge Caching Architecture (I)}


\begin{figure}[!t]
\centering
\includegraphics[width=0.48\textwidth]{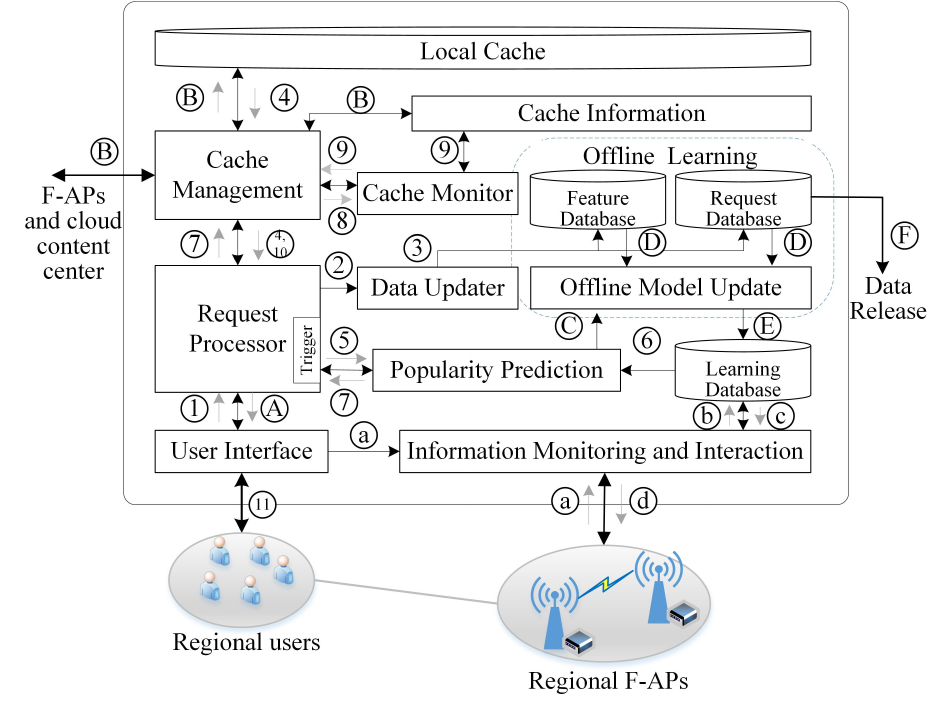}
\caption{Illustration of the first learning based edge caching architecture.}
\label{fig2}
\end{figure}
\begin{figure}[!t]
\centering
\includegraphics[width=0.38\textwidth]{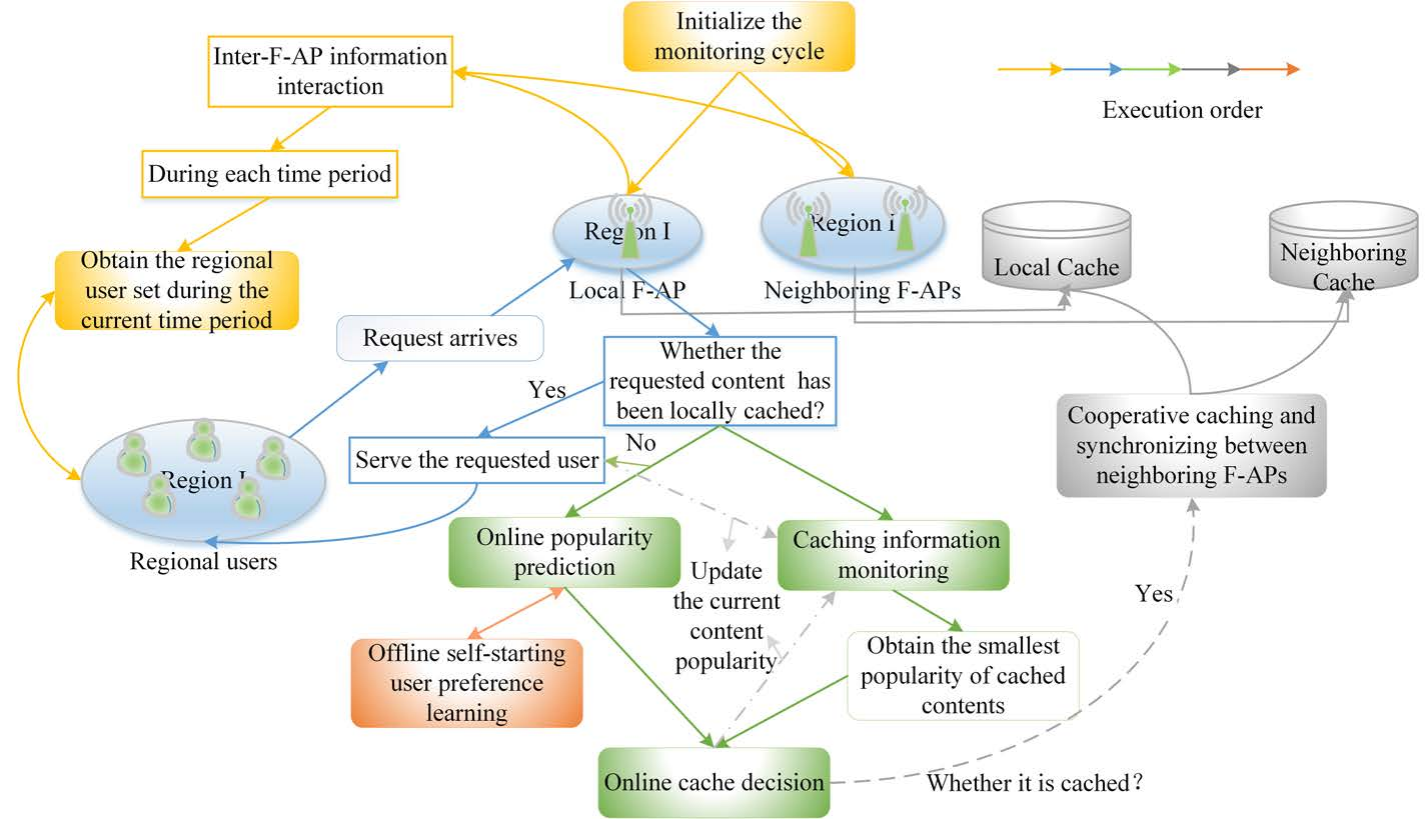}
\caption{\normalsize Illustration of the second learning based edge caching architecture.}
\label{fig2-new}
\end{figure}

\begin{figure*}[!t]
\centering 
\includegraphics[width=0.7\textwidth]{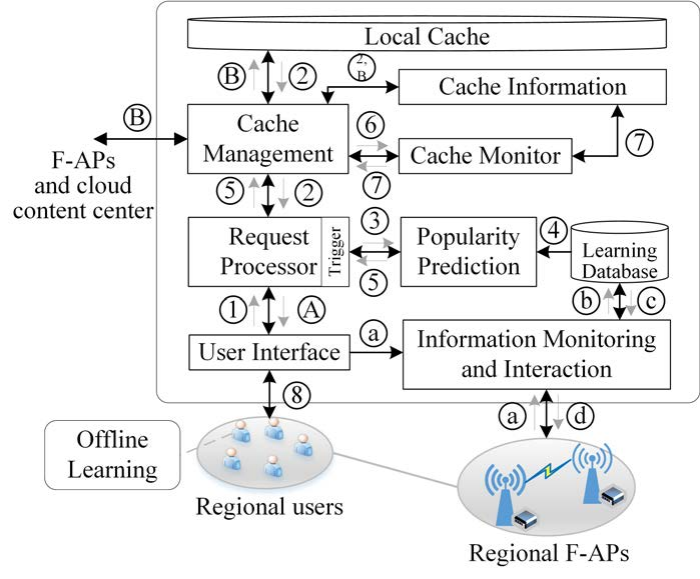}
\caption{Illustration of the learning based edge caching flowchart.}
\label{fig3}
\end{figure*}

For the proposed first architecture as illustrated in Fig. \ref{fig2}, its fundamental modules are as follows:  \emph{Local Cache}, \emph{Cache Management}, \emph{Request Processor}, and \emph{User Interface}, which have functions similar to  the traditional caching architectures \cite{CMAB, Trend}.
In order to  learn  user preference and predict content popularity, our proposed architecture also includes the following modules: \emph{Information Monitoring
and Interaction}, \emph{Popularity Prediction}, \emph{Offline Learning}, \emph{Data Updater}, \emph{Cache Information}, and \emph{Cache Monitor}. Their functions are described as follows.

\begin{itemize}
\item The {Information Monitoring and Interaction} module is mainly responsible for realizing
regular information monitoring and interaction between regional F-APs.
On the one hand, this module periodically collects the current
user set of the serving F-AP and the current user information (including the regional user set and user preference model) of the other regional F-APs, and stores them into the {Learning Database}.
On the other hand, this module  periodically sends the current  user information of the serving F-AP to the other regional ones, and finally realizes the monitoring and sharing of the current  user information among the regional F-APs.
\item The Popularity Prediction module is mainly responsible for predicting the  popularity of the current requested content based on the Learning Database.
    Note that the Offline Learning module will be initiated if the average   prediction error is larger than {the predefined threshold}.
\item The {Offline Learning module} is mainly responsible for learning the current user preference model parameters based on the collected information from the {Feature Database} and {Request Database}.
\item The {Data Updater} module is mainly responsible for updating the content feature data,
the requested content, and the requested time to the Feature Database and Request Database, and realizing the collection and update of the requested information.
\item The {Cache Information} module is mainly responsible for storing and updating the current content popularity information, the initial cache time, and the cache content ID.
\item The {Cache Monitor} module is mainly responsible for monitoring the cache information to capture the contents which need to be removed from the  local cache.
\end{itemize}

The flowchart of our proposed first learning base edge caching architecture consists of five phases as illustrated in Fig. \ref{fig2} and Fig. \ref{fig3}, and is presented below.

\begin{enumerate}
    \item[(i)] {Initializing and periodic information monitoring}

(a) The Information Monitoring and Interaction module periodically extracts the current  user information of the serving F-AP and regional ones from their User Interface modules.
(b) This module regularly updates the collected regional user information to the Learning Database.
(c) This module extracts the current  user information of the serving F-AP from the {Learning Database}.
(d) This module delivers the current  user information of the serving F-AP to the regional ones.
    \item[(ii)] {Direct local request responding}

(1)  The {User Interface} module delivers the user requesting information to the {Request Processor} module. 
(2)  {The {Request Processor} module} initiates a data updating procedure. 
(3)  The {Data Updater} module carries on numerical processing to the requested content feature and writes the processed feature data into the {Feature Database}, and updates the requested content information in the {Request Database}.
(4)  If the requested content is  stored locally, the {Cache Management} module delivers the stored content from the local cache to the {Request Processor} module.
(A)  The {Request Processor} module serves the user request.
    \item[(iii)] {Dynamic content caching and updating}

(5) If the requested content is not stored locally, the {Request Processor} module triggers the {Popularity Prediction} module to make online content popularity prediction of the requested content.
(6) The {Popularity Prediction} module extracts the user preference model parameters and the requested content features from the {Learning Database}, and then predicts the popularity of the requested content.
(7)  The {Popularity Prediction} module feeds back the predicted popularity of the requested content first to the {Request Processor} module, and then to the {Cache Management} module through the {Request Processor} module.
(8)  The {Cache Management} module triggers the {Cache Monitor} module to initiate monitoring of the cache content information.
(9)  The {Cache Monitor} module extracts the information of the content to be removed from the {Cache Information} module, and then feeds back the information to the {Cache Management} module.
(10)  The {Cache Management} module makes cache decision based on the feedback information, and notifies the Request Processor module to serve the user request.
(11)  The User Interface module broadcasts the local cached content information to the  regional users.
     \item[(iv)] {Cooperative caching and information synchronizing}

(B)  The {Cache Management} module executes the received cache decision of  the current F-AP, and notifies the other regional F-APs  to execute the same cache decision. If the requested content is  to be cached, the {Cache Management} module retrieves the content from the  neighboring F-APs in the other regions or the cloud content center, and then stores it locally by means of partition-based caching.
The {Cache Management} module updates the cache information of the current F-AP, and synchronizes the cache information of the other regional F-APs.
     \item[(v)] Offline self-starting user preference learning

(C)  The {Popularity Prediction} module initiates the {Offline Learning} module if  the average prediction error cumulated under a user preference model is larger than {the predefined threshold.}
(D)  The {Offline Model Update} module retrieves the historical requested data of the considered user from the {Feature Database} and {Request Database}, {generates a new training sample set,} and then updates the user preference model parameters.
(E)  The Offline Model Update module refreshes the updated user preference model parameters to the {Learning Database}.
(F)  {The Request Database releases the  historical requested data of the considered user.}
\end{enumerate}


\subsection{Learning Based Edge Caching Architecture (II)}


For the proposed second architecture as illustrated in Fig. \ref{fig2-new},
part of the functionality, i.e., offline user preference learning, is transferred from the F-APs to the  powerful and smart UEs. Therefore,
both the signaling overhead among regional F-APs and the computational burden undertaken by the F-APs can be greatly reduced.

The corresponding processing flow of our proposed edge caching policy is presented in brief below.
(a-d) The current F-AP is mainly responsible for monitoring the regional users in coordination with the regional F-APs and storing the corresponding user preference model parameters from the UEs into the Learning Database. (1-2, A) The current F-AP serves the requested user if it caches the requested content locally. (3-8) If the requested content is not stored in the local cache, the current F-AP predicts the content popularity, makes the corresponding caching decision, and broadcasts the caching information to the regional users. (B) The current F-AP notifies the other regional F-APs to execute the same cache decision and {then updates} the corresponding cache information.

We remark here that our proposed first architecture is more suitable for  wireless communications scenarios including intelligent F-APs and general UEs while our proposed second architecture is more suitable for  wireless communications scenarios including general F-APs and intelligent UEs. With the rapid development of AI and smart UEs, the second architecture will show more advantages and dominate in future wireless communications.
We also remark here that the signaling overhead can be further reduced by setting a cluster head for the clustered regional F-APs and the corresponding edge caching architecture is omitted here due to space limitation.

\section{Performance Analysis}
    In this section, the performance of our proposed edge caching policy will
be analyzed.
Firstly, we derive the upper bound of the popularity prediction error of
our proposed online content popularity prediction algorithm.
Secondly, we  derive the lower bound of the cache hit rate of our proposed edge caching policy.
Finally, we derive the regret bound of the overall cache hit rate of our proposed edge caching policy.

\vspace*{-10pt}
\subsection{The Upper Bound of the Popularity Prediction Error}

Let $\mathcal{E} $ denote the convex set of the optimization problem in \eqref{eq11} with $\mathcal{E} \in {\mathbb{R}^N}$, and define
\begin{equation}
{W_u} = \mathop {{\max}}\limits_{\boldsymbol{w}_u,\boldsymbol{w}_{u'} \in \mathcal{E}} \left\| {{{\boldsymbol{w}}_u} - {{\boldsymbol{w}}_{u'}}} \right\|.
\end{equation}
Let ${{L_d}}\left( {{{\boldsymbol{w}}_u}} \right)$ denote a sequence of convex loss functions,
and define
\begin{equation}
{G_u} = \mathop {\max }\limits_{{w_u} \in \mathcal{E},1\le d \le D_t, t = 1, 2, \cdots, T } 
\left\| {\nabla {L_d}\left(
{{{\boldsymbol{w}}_u}} \right)} \right\|.
\end{equation}
Then, according to Corollary 1 in \cite{Regret},
the following relationship can be readily established for the optimization problem in \eqref{eq11} in the finite time horizon $T$
\begin{align}\label{eq17}
\sum\limits_{t = 1}^T {\sum\limits_{d = 1}^{{D_t}} {{L_d}\left( {{{\boldsymbol{w}}_u}} \right) - \sum\limits_{t = 1}^T {\sum\limits_{d = 1}^{{D_t}} {{L_d}\left( {{\boldsymbol{w}}_u^*} \right)} } } }  \le {W_u}{G_u}\sqrt {2{D}},
\end{align}
where ${{\boldsymbol{w}}_u^*}$ denotes the vector of the optimal user preference model parameters.
Let ${\tau _u}$ denote a sufficiently small constant with a positive value, which can meet  $\sum\limits_{t = 1}^T {\sum\limits_{d = 1}^{{D_t}} {{L_d}\left( {{\boldsymbol{w}}_u^*} \right)} }  \le {\tau _u}$.
Define
\begin{align}
{W_{\max }} &= \mathop {{\max}}\limits_{u \in {U_t}, t=1,2,\cdots, T} {{W_u}},\\
{G_{\max }} &= \mathop {{\max}}\limits_{u \in {U_t}, t=1,2,\cdots, T} {{G_u}},\\
{\tau _{\max }} &= \mathop {{\max}}\limits_{u \in {U_t}, t=1,2,\cdots, T} {{\tau_u}}.
\end{align}
Then, we have the following theorem.
\begin{theorem}\label{th1}
The expected popularity prediction error for the overall $D$ requests in the finite time horizon $T$, i.e., $\mathbb{E}\sum\limits_{t = 1}^T {\sum\limits_{d = 1}^{{D_t}} {\left| {{{\hat P}_{t,d}} - {P_{t,d}}} \right|} }$, can be upper bounded by $\frac{{U_{\max }}}{{{U_{\min }}}}\left( {W_{\max }}{G_{\max }}\sqrt {2D}  + {\tau _{\max }}\right)$.
\end{theorem}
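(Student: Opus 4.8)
The plan is to reduce the aggregate popularity prediction error to a sum of per-user prediction errors and then control each per-user error by the online-learning regret bound already recorded in \eqref{eq17}. The whole argument is essentially a triangle-inequality reduction followed by a summation over users, with the two population-size constants $U_{\min}$ and $U_{\max}$ entering at two different places.

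First I would exploit the averaging structure of the predictor. Since ${\widehat P_{t,d}} = \frac{1}{U_t}\sum_{u=1}^{U_t}{\widehat p_{t,u,d}}$ by \eqref{eq5}, and the true popularity admits the analogous average ${P_{t,d}} = \frac{1}{U_t}\sum_{u=1}^{U_t}{p_{t,u,d}}$, the triangle inequality gives the pointwise bound
\begin{equation}
\left| {{\widehat P_{t,d}} - {P_{t,d}}} \right| \le \frac{1}{U_t}\sum_{u=1}^{U_t}\left| {{\widehat p_{t,u,d}} - {p_{t,u,d}}} \right|.
\end{equation}
Summing over all $D$ requests, taking expectation, and using $U_t \ge U_{\min}$ to pull out the factor $1/U_{\min}$ reduces the target to $\frac{1}{U_{\min}}\,\mathbb{E}\sum_{t,d}\sum_{u=1}^{U_t}\left| {{\widehat p_{t,u,d}} - {p_{t,u,d}}} \right|$. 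I would then interchange the order of summation so that, for each fixed user $u$, the inner sum runs over all requests. Because the popularity predictor evaluates $\widehat p_{t,u,d}$ for every user on every arriving content request, each user contributes a prediction on all $D$ requests, so the per-user cumulative absolute error is exactly $\sum_{t,d}\left| {{\widehat p_{t,u,d}} - {p_{t,u,d}}} \right|$.

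Next I would identify this per-user cumulative absolute error with the cumulative convex loss $\sum_{t,d}{L_d}({\boldsymbol{w}}_u)$ to which \eqref{eq17} applies, and invoke the regret guarantee together with the assumption $\sum_{t,d}{L_d}({\boldsymbol{w}}_u^*) \le \tau_u$, obtaining $\sum_{t,d}\left| {{\widehat p_{t,u,d}} - {p_{t,u,d}}} \right| \le {W_u}{G_u}\sqrt{2D} + \tau_u$ for each $u$. Since at most $U_{\max}$ distinct users are ever present, summing this per-user bound over users introduces the factor $U_{\max}$, and replacing ${W_u},{G_u},\tau_u$ by their maxima ${W_{\max}},{G_{\max}},\tau_{\max}$ gives the uniform per-user bound ${W_{\max}}{G_{\max}}\sqrt{2D} + \tau_{\max}$. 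Combining this with the $1/U_{\min}$ pulled out earlier yields the claimed factor $\frac{U_{\max}}{U_{\min}}$ and completes the estimate.

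The main obstacle, and the step I would treat most carefully, is the identification of the absolute prediction error $\left| {{\widehat p_{t,u,d}} - {p_{t,u,d}}} \right|$ with a convex loss ${L_d}({\boldsymbol{w}}_u)$ covered by the regret result of \cite{Regret}: the offline learning actually minimizes the logistic loss, which is convex in ${\boldsymbol{w}}_u$, whereas the sigmoid makes $\left| {{\widehat p_{t,u,d}} - {p_{t,u,d}}} \right|$ non-convex in ${\boldsymbol{w}}_u$. The delicate point is therefore to specify the sequence ${L_d}({\boldsymbol{w}}_u)$ so that the regret bound transfers to the absolute prediction error (and to make explicit that the outer $\mathbb{E}$ is over the realized request labels that drive the learned $\boldsymbol{w}_u$). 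A secondary technical item is justifying the summation-count bound by $U_{\max}$ given the time-varying user set ${\cal U}_t$; here the assumption that the regional users remain fixed within a period, together with $U_t \le U_{\max}$, makes the counting clean.
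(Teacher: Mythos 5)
Your proposal follows essentially the same route as the paper's own proof in Appendix~B: decompose $\left| {\hat P}_{t,d} - P_{t,d} \right|$ into per-user errors via \eqref{eq5}, use $U_{\min } \le U_t \le U_{\max }$ to extract the factor $U_{\max }/U_{\min }$, and apply the regret bound \eqref{eq17} with the absolute loss $L_d\left( \boldsymbol{w}_u \right) = \left| \hat p_{t,u,d} - p_{t,u,d} \right|$ together with $\sum_{t,d} L_d\left( \boldsymbol{w}_u^* \right) \le \tau_u$. The convexity concern you flag is real but is glossed over by the paper as well --- it simply declares the absolute loss to be a convex loss and invokes Corollary~1 of \cite{Regret} --- so your argument matches the paper's proof both in structure and in level of rigor.
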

\begin{proof}
Please see Appendix B.
\end{proof}

It is clear that an upper bound exists for the
cumulative prediction error of the  content requested probability of one single user  and  that of the  regional content popularity within a limited time periods.
Specifically, by using the user preference model that is obtained through self-learning, the upper bound of the cumulative prediction error of the  content requested probability of one single user  has a  sub-linear relationship with the total number of requests $D$ in the finite time horizon $T$.
Similar relationship can be found for the
cumulative prediction error of the  content popularity in the considered region, which means that $\frac{1}{D} \mathbb{E}\sum\limits_{t = 1}^T {\sum\limits_{d = 1}^{{D_t}} {\left| {{{\hat P}_{t,d}} - {P_{t,d}}} \right|} }\to 0 $ as $D \to \infty $.

The above analytical results  imply that the learned user preference model will asymptotically approach the  real user preference model through sufficient learning with the  collection of more user requesting information and the increased requests.
Correspondingly, the proposed online content popularity prediction algorithm can make the  content popularity prediction   more accurate.
On the other hand, the analytical results also reveal that the performance of our proposed policy can be improved with the increased  requests.
After a certain number of content requests, the prediction precision of the proposed policy can achieve the ideal value.

\subsection{The Lower Bound of the Cache Hit Rate}

In this subsection, we  first show the cache hit rate of the optimal edge caching policy which knows the real popularities of all the contents  and caches the most popular contents during each time period,
and then derive the lower bound of the cache hit rate of the proposed edge caching policy and reveal  their relationship. 

\subsubsection{The cache hit rate of the optimal edge caching policy}

In the ideal case, the optimal cache hit rate can be achieved by caching the most popular contents of the current time period based on
the known content popularity.
Let ${\Phi ^*}$ denote the optimal edge caching policy. In practice, we note that the cache hit rate depends not only on the edge caching policy ${\Phi }$ but also on {the degree of concentration of the same content requests.} Generally speaking, a more concentrated content request process implies a higher potential cache hit rate. For the ease of analysis, we assume that the requests of the same content  {are} concentrated in one time period.
Let $P_{t,d'}^{'}$ denote the content popularity of the ${d'}$th most popular content when it is requested firstly during the $t$th time period, {and $F_t$ the number of requested different contents} during the $t$th time period. Then, by using the relationship in \eqref{eq5}, the optimal cache hit rate during the  $t$th time period can be calculated as follows
\begin{align}\label{eq20}
\mathcal{H}_t \left( \Phi ^* \right) = \frac{{\sum\nolimits_{d' = 1}^{M\varphi } {U_t}{P_{t,d'}^{'}}} }{{\sum\nolimits_{d'= 1}^{F_t} {U_t}{P_{t,d'}^{'}}} }= \frac{{\sum\nolimits_{d' = 1}^{M\varphi } {P_{t,d'}^{'}}} }{{\sum\nolimits_{d' = 1}^{F_t} {P_{t,d'}^{'}}} }.
\end{align}

Note that $\mathcal{H}_t \left( \Phi ^* \right)$ has no relation with the number of requested users during the current time period, and may vary with different time periods.
Besides, we make no assumption about the popularity distribution, and $\mathcal{H}_t \left( \Phi ^* \right)$ is just the cache hit rate that is calculated based on the real content popularity during the current time period.

\subsubsection{The lower bound of the cache hit rate of the proposed edge caching policy}
In practice, there always exist popularity prediction errors.
Let $\hat P_{t,d'}^{'}$ denote the predicted   popularity with respect to $ P_{t,d'}^{'}$.
Let
\begin{equation}
\Delta {P_t} = \mathop {\max} \limits_{d'  = 1,2, \cdots ,F_t} {| {\hat P_{t,d'}^{'} - P_{t,d'}^{'} } |}.
\end{equation}
Then, we have the following theorem.

\begin{theorem}\label{th2}
During the $t$th time period, the achievable cache hit rate $\mathcal{H}_t \left( \Phi  \right)$ can be  lower bounded by $\mathcal{H}_t \left( \Phi ^* \right)  - {M\varphi} (\Delta {P_t} U_t + 2 )/ {D_t}$.
\end{theorem}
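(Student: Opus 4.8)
The plan is to recast the cache-hit comparison as a comparison of popularity sums, and then to split the loss of $\Phi$ relative to $\Phi^*$ into two conceptually separate parts: a \emph{ranking} loss caused by popularity prediction errors, and a \emph{warm-up} loss caused by the online (rather than clairvoyant) nature of $\Phi$. Starting from \eqref{eq20}, I would write the number of hits of the optimal policy as $\mathcal{H}_t(\Phi^*)D_t = U_t\sum_{d'=1}^{M\varphi}P'_{t,d'}$, using that $U_t P'_{t,d'}$ is exactly the number of requests for the $d'$th most popular content via \eqref{eq5}. I would then identify the set $\mathcal{S}^*$ of the $M\varphi$ contents with the largest \emph{true} popularities $P'_{t,d'}$ (retained by $\Phi^*$) and the set $\mathcal{S}$ of the $M\varphi$ contents with the largest \emph{predicted} popularities $\hat P'_{t,d'}$, which is the set that the sorting-and-eviction mechanism of Algorithm 1 retains once the priority queue $Q_{t,d}$ has settled. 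The whole argument then reduces to bounding $\mathcal{H}_t(\Phi^*)D_t-\mathcal{H}_t(\Phi)D_t$, i.e. the number of hits lost.

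For the ranking loss I would exploit the defining optimality of $\mathcal{S}$ with respect to the \emph{predicted} popularities, namely $\sum_{f\in\mathcal{S}}\hat P_{t,f}\ge\sum_{f\in\mathcal{S}^*}\hat P_{t,f}$. Adding and subtracting predicted popularities, the true-popularity gap $\sum_{f\in\mathcal{S}^*}P_{t,f}-\sum_{f\in\mathcal{S}}P_{t,f}$ telescopes into the nonpositive term $\sum_{f\in\mathcal{S}^*}\hat P_{t,f}-\sum_{f\in\mathcal{S}}\hat P_{t,f}$ plus prediction-error terms, each of which is bounded by $\Delta P_t$ because $|\hat P_{t,f}-P_{t,f}|\le\Delta P_t$ for every content by the definition of $\Delta P_t$. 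Since at most $M\varphi$ slots are involved, this collapses to a popularity gap of order $M\varphi\Delta P_t$; multiplying by the request-count factor $U_t$ yields a ranking loss of at most $M\varphi\Delta P_t U_t$ requests. Getting the leading constant to be exactly one (the telescoping naively leaves error terms on both $\mathcal{S}$ and $\mathcal{S}^*$) is part of the careful bookkeeping needed to match the stated bound.

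The warm-up loss is where the online character of $\Phi$ must be handled, and I expect this to be the main obstacle. Unlike the clairvoyant $\Phi^*$, which pre-stores its contents and hence serves every matching request as a hit, $\Phi$ can cache a content only \emph{after} observing a request for it, so the first request for each eventually-cached content is necessarily a miss; moreover the residual-request-rate update $P^{\mathrm{cur}}_{t,f(d)}\leftarrow P^{\mathrm{cur}}_{t,f(d)}-1/U_t$, combined with at most one eviction-and-reinsertion triggered by the comparison $\hat P_{t,d}>P^{\mathrm{least}}$, can cost at most one further miss per slot. Under the stated concentration assumption (all requests for a content lie in one period), I would argue that each of the $M\varphi$ cache slots suffers at most two such transient misses, for a warm-up loss of at most $2M\varphi$ requests. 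The delicate points are to verify that the residual-rate bookkeeping cannot trigger churn more than once per slot and that no cascade of evictions inflates the count beyond the constant $2$.

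Combining the two contributions, the number of hits lost by $\Phi$ over the $t$th period is at most $M\varphi\Delta P_t U_t+2M\varphi=M\varphi(\Delta P_t U_t+2)$, and dividing by $D_t$ gives $\mathcal{H}_t(\Phi)\ge\mathcal{H}_t(\Phi^*)-M\varphi(\Delta P_t U_t+2)/D_t$, as claimed. In summary, the ranking part is a short convexity-free optimality argument, while the genuinely subtle work is the transient-miss accounting for the online policy and the matching of the exact constants $\Delta P_t U_t$ and $2$.
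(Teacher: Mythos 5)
Your overall plan mirrors the paper's proof in its Appendix C: the same concentration assumption, the same key optimality inequality $\sum_{f\in\mathcal{S}}\hat P_{t,f}\ge\sum_{f\in\mathcal{S}^*}\hat P_{t,f}$ (the paper's \eqref{eq22}), and a transient-miss budget of $2$ requests per cache slot. The genuine gap is exactly the point you flag and then defer: your ranking loss compares the \emph{true} popularity sums of $\mathcal{S}^*$ and $\mathcal{S}$, and that comparison unavoidably pays the prediction error twice --- once to pass from $P$ to $\hat P$ on $\mathcal{S}^*$, and once to pass back from $\hat P$ to $P$ on $\mathcal{S}$ --- yielding $2M\varphi\Delta P_t U_t$ rather than $M\varphi\Delta P_t U_t$. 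No bookkeeping can repair this within your framing: take $M\varphi=1$ and two contents with true popularities $P_1=p$ and $P_2=p-2\Delta P_t+\epsilon$, and predictions $\hat P_1=p-\Delta P_t$ and $\hat P_2=p-\Delta P_t+\epsilon$ (both prediction errors equal $\Delta P_t$); then $\mathcal{S}=\{2\}$, $\mathcal{S}^*=\{1\}$, and the true-popularity gap $P_1-P_2=2\Delta P_t-\epsilon$ is arbitrarily close to $2\Delta P_t$ for the single slot. So the lemma your plan needs ("ranking loss at most $\Delta P_t$ per slot measured in true popularity") is simply false, and the constant $1$ cannot emerge from your decomposition.

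The paper obtains the constant $1$ by never comparing true popularity sums of the two sets. Instead, it counts the hits of $\Phi$ directly in terms of the \emph{predicted} popularities of the cached contents, asserting $\sum_{d=1}^{D_t}\theta_{t,d}\left(f\left(d\right)\right)=\sum_{f=1}^{M\varphi}\left\lfloor U_t\hat P_{t,d_f'}^{'}-1\right\rfloor$, and then applies the error bound $\hat P_{t,d'}^{'}\ge P_{t,d'}^{'}-\Delta P_t$ only once, on the $\mathcal{S}^*$ side (its \eqref{eq31} feeding \eqref{eq24}); the second unit in the "$+2$" comes from the slack $\lfloor x\rfloor\ge x-1$, not from eviction churn as in your warm-up accounting (which otherwise lands on the same $2M\varphi$). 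Whether that hit-count identity is itself fully rigorous --- it equates actual request counts with predicted ones --- is a separate question about the paper, but it is precisely the device that produces the stated constant, and it is the ingredient missing from your proposal.
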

\begin{proof}
Please see Appendix C.
\end{proof}

{It is clear that a lower bound exists for the cache hit rate of the proposed edge caching policy during each time period. The analytical result from Theorem \ref{th2}  gives us the insight that there exists a certain performance gap, i.e., $ {{M\varphi }}\left( {\Delta {P_t}{U_t} + 2} \right)/ {{{D_t}}}$, between the cache hit rate of our proposed edge caching policy and that of the optimal edge caching policy.
The first component of the performance gap, i.e., $ {{M\varphi }}\Delta {P_t}{U_t}/ {{{D_t}}}$, is mainly caused by the
popularity prediction error $\Delta {P_t}$ and principally determined by the accuracy of the learned user preference model. The second component of the performance gap, {i.e., $ {{2M\varphi }}/{{{D_t}}}$,} is mainly caused by the operational mechanism of our proposed edge caching policy, which decides to cache a content only after but not before its first request (i.e., an initial cache miss will happen).

Although our proposed edge caching policy may result in an initial cache miss, it can actually avoid the extremely high computational load that may bring about to F-APs due to the need of continuously predicting all the content popularities otherwise. We point out here that this type of performance gap will gradually approach zero with the increased number of content requests, and hence the benefit outweighs the cost. Moreover, the analytical result from Theorem \ref{th2} also reveals that the overall performance gap will approach zero 
when the prediction error of the content popularity approaches zero, i.e., $\Delta {P_t} \to 0$, and the number of content requests during the $t$th time period is much larger than the overall cache size of all the F-APs in the considered region, i.e., ${D_t} \gg M\varphi $. Correspondingly, the cache hit rate of our proposed edge caching policy during the $t$th time period will approach that of the optimal edge caching policy if the above two conditions are satisfied.}

\subsection{The Regret Bound  of the Overall Cache Hit Rate}

In order to measure the performance loss of the proposed edge caching policy in comparison with the optimal one,
we will analyze and bound the regret  of the overall cache hit rate.

Let $\mathcal H \left( {{\Phi ^*}} \right)$ denote the overall cache hit rate of the optimal edge caching policy in the finite time horizon $T$. Then, from \eqref{eq_0} and \eqref{eq20}, it can be calculated as follows
\begin{equation}\label{eq21}
\mathcal H\left( {{\Phi ^*}} \right) = \frac{{\sum\nolimits_{t = 1}^T {\sum\nolimits_{d' = 1}^{M\varphi } {U_t} {P_{t, d'}^{{'}}} } }}{{\sum\nolimits_{t = 1}^T {\sum\nolimits_{d' = 1}^{F_t} {U_t} {P_{t,d'}^{{'}}} } }} = \frac{{\sum\nolimits_{t = 1}^T {\mathcal H_t \left( {{\Phi ^*}} \right){D_t}} }}{{\sum\nolimits_{t = 1}^T {{D_t}} }}.
\end{equation}
Let $\mathcal H \left( \Phi  \right)$ denote the overall cache hit rate of the proposed edge caching policy in the finite time horizon $T$. Then, we have
\begin{equation}\label{eq-new21}
\mathcal H\left( {{\Phi }} \right) = \frac{{\sum\nolimits_{t = 1}^T {\mathcal H_t \left( {{\Phi }} \right){D_t}} }}{{\sum\nolimits_{t = 1}^T {{D_t}} }}.
\end{equation}
Then, the regret of the overall cache hit rate of the proposed edge caching policy for the total $D$ requests in the finite time horizon $T$ can be defined as follows
\begin{equation}\label{eq39}
R\left( D \right) = \mathbb{E}\left[ {\mathcal H \left( {{\Phi ^*}} \right) - \mathcal H \left( \Phi  \right)} \right].
\end{equation}
Utilizing the analytical results from Theorem \ref{th1} and Theorem \ref{th2}, we have the following theorem.
\begin{theorem}\label{th3}
The regret of   the overall cache hit rate for the total $D$ requests in the finite time horizon $T$, i.e., $R\left( D \right)$, can be upper bounded by $ {{U_{\rm{max}}M\varphi }} \left[ {\frac{U_{\max }}{U_{\min } }\left({W_{\max }}{G_{\max }}\sqrt {2D}  + {\tau _{\max }}\right)  + 2T / U_{\rm{max}}} \right] /{D}$.
\end{theorem}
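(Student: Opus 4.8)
The plan is to reduce the regret of the overall cache hit rate to a per-period quantity and then chain together Theorem \ref{th2} (the per-period lower bound) and Theorem \ref{th1} (the cumulative prediction-error bound). First I would substitute the closed forms \eqref{eq21} and \eqref{eq-new21} into the definition \eqref{eq39}, using $\sum_{t=1}^T D_t = D$, to obtain
\begin{equation}
R\left( D \right) = \frac{1}{D}\mathbb{E}\left[ \sum_{t=1}^T \left( \mathcal{H}_t\left( \Phi^* \right) - \mathcal{H}_t\left( \Phi \right) \right) D_t \right].
\end{equation}
The weights $D_t$ here are exactly what is needed, because Theorem \ref{th2} gives $\mathcal{H}_t(\Phi^*) - \mathcal{H}_t(\Phi) \le M\varphi(\Delta P_t U_t + 2)/D_t$, so multiplying by $D_t$ cancels the denominator and leaves
\begin{equation}
R\left( D \right) \le \frac{M\varphi}{D}\mathbb{E}\left[ \sum_{t=1}^T \left( \Delta P_t U_t + 2 \right) \right] = \frac{M\varphi}{D}\left( \mathbb{E}\sum_{t=1}^T \Delta P_t U_t + 2T \right).
\end{equation}

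Next I would separate the two contributions. The constant term contributes $2M\varphi T/D$ directly, which after factoring out $U_{\max}M\varphi/D$ becomes the $2T/U_{\max}$ summand in the claimed bound. For the first term I would apply the uniform bound $U_t \le U_{\max}$ to pull out $U_{\max}$, leaving $U_{\max}\,\mathbb{E}\sum_{t=1}^T \Delta P_t$ to be controlled.

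The key step, and the one that needs the most care, is relating the per-period maximum prediction error $\Delta P_t = \max_{d'} |\hat P'_{t,d'} - P'_{t,d'}|$ to the cumulative per-request prediction error $\sum_{t}\sum_d |\hat P_{t,d} - P_{t,d}|$ bounded in Theorem \ref{th1}. The observation I would exploit is that each $\Delta P_t$ is attained at the first request of some specific content, hence equals one particular nonnegative summand $|\hat P_{t,d^*} - P_{t,d^*}|$ appearing in the inner sum $\sum_{d=1}^{D_t} |\hat P_{t,d} - P_{t,d}|$; since every summand is nonnegative, $\Delta P_t \le \sum_{d=1}^{D_t} |\hat P_{t,d} - P_{t,d}|$. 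Summing over $t$ and taking expectations then gives $\mathbb{E}\sum_{t=1}^T \Delta P_t \le \mathbb{E}\sum_{t=1}^T\sum_{d=1}^{D_t}|\hat P_{t,d} - P_{t,d}| \le \frac{U_{\max}}{U_{\min}}(W_{\max}G_{\max}\sqrt{2D} + \tau_{\max})$ by Theorem \ref{th1}.

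Finally I would assemble the pieces: substituting this bound back yields
\begin{equation}
R\left( D \right) \le \frac{M\varphi}{D}\left[ U_{\max}\frac{U_{\max}}{U_{\min}}\left( W_{\max}G_{\max}\sqrt{2D} + \tau_{\max} \right) + 2T \right],
\end{equation}
and factoring $U_{\max}M\varphi/D$ out of the bracket reproduces exactly the claimed expression. The only subtlety I anticipate is justifying the maximum-versus-sum inequality cleanly, i.e.\ confirming that the content achieving $\Delta P_t$ is indeed among the requests counted in the inner sum so that the single-term bound is legitimate; everything else is bookkeeping built on the cancellation of $D_t$ and the uniform bound $U_t \le U_{\max}$.
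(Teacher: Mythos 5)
Your proposal is correct and follows essentially the same route as the paper's own proof: expand $R(D)$ as the $D_t$-weighted sum of per-period gaps, cancel $D_t$ via Theorem \ref{th2}, bound $\Delta P_t$ by the cumulative per-request error (max of nonnegative terms $\le$ their sum) together with $U_t \le U_{\max}$, and finish with Theorem \ref{th1}. In fact, your explicit justification that the content attaining $\Delta P_t$ corresponds to one of the request-level error terms is spelled out more carefully than in the paper, which asserts the inequality directly from the definition of $\Delta P_t$.
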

\begin{proof}
Please see Appendix D.
\end{proof}

According to the above theorem, with the limited $ U_{\max }$, $M\varphi$ and $ T$,
the following relationship can be naturally obtained:
$\mathop {\lim }\limits_{D \to + \infty } R(D) = 0,$
which shows that the performance loss of our proposed edge caching policy can be gradually reduced to zero as the number of requests is increased, i.e., our proposed edge caching policy has the capability to achieve the optimal performance asymptotically. The reason is that the learned
user preference model gradually approaches the real one with the increased request samples, which  makes the prediction errors even  smaller.

\section{Simulation Results}


To evaluate the performance of  the proposed edge caching policy, we take movie content\footnote{Other types of contents are also possible. However, due to the limitations of data acquisition, movie content is just taken here as an example.} as an  example and our main datasets are extracted from the MovieLens 200M Dataset \cite{MovieLens, Dataset}. 
From the MovieLens, we choose the requesting dataset of the selected 30 users\footnote{{It takes a great deal of simulation works to learn user preference models for a large number of users, and this is the reason why we only select 30 users in our simulations. In practice, regional F-APs indeed serve a much larger number of users. With the increase of the users served by the regional F-APs, the aggregation of the contents requested by the users will become much higher. Accordingly, popular contents will be more concentrated whereas non-popular contents will be more dispersed. Correspondingly, it will have a larger cache hit rate to cache the popular contents with more served users.}} who request the contents from January 01, 2010 to October 17, 2016.
The first part of the requesting dataset, whose requesting dates are from January 01, 2010 to December 31, 2015, 
is used for initializing the user preference,
while the second part of  the requesting dataset, whose requesting dates are from  January 01, 2016 to October 17, 2016, is used for evaluating the performance.
To simulate the content request process, we take the movie rating from a user as the request for this movie just as it is assumed in \cite{CMAB} and \cite{Trend}.
In our simulations, we randomly select 25 users from {the 30 users} as the fixed regional users while the remaining 5 users as the mobile users that randomly enter the region.
Besides, we set {the number of considered F-APs} $M$ to $3$, the finite time horizon $T$ to  $6984$ hours, the preset monitoring cycle to 1 hour\footnote{Note that we do not explore the impact of different monitoring cycle settings on our proposed caching policy. The reason is that this is a specific regional scenario which we simulate by randomly selecting users.
The impact of the monitoring cycle on the caching policy here has no reference significance to the actual scenario.} and the predefined threshold $\gamma$ to $0.2$, respectively.


\begin{figure}[!t]
\centering 
\includegraphics[height=4.5cm,width=7.5cm]{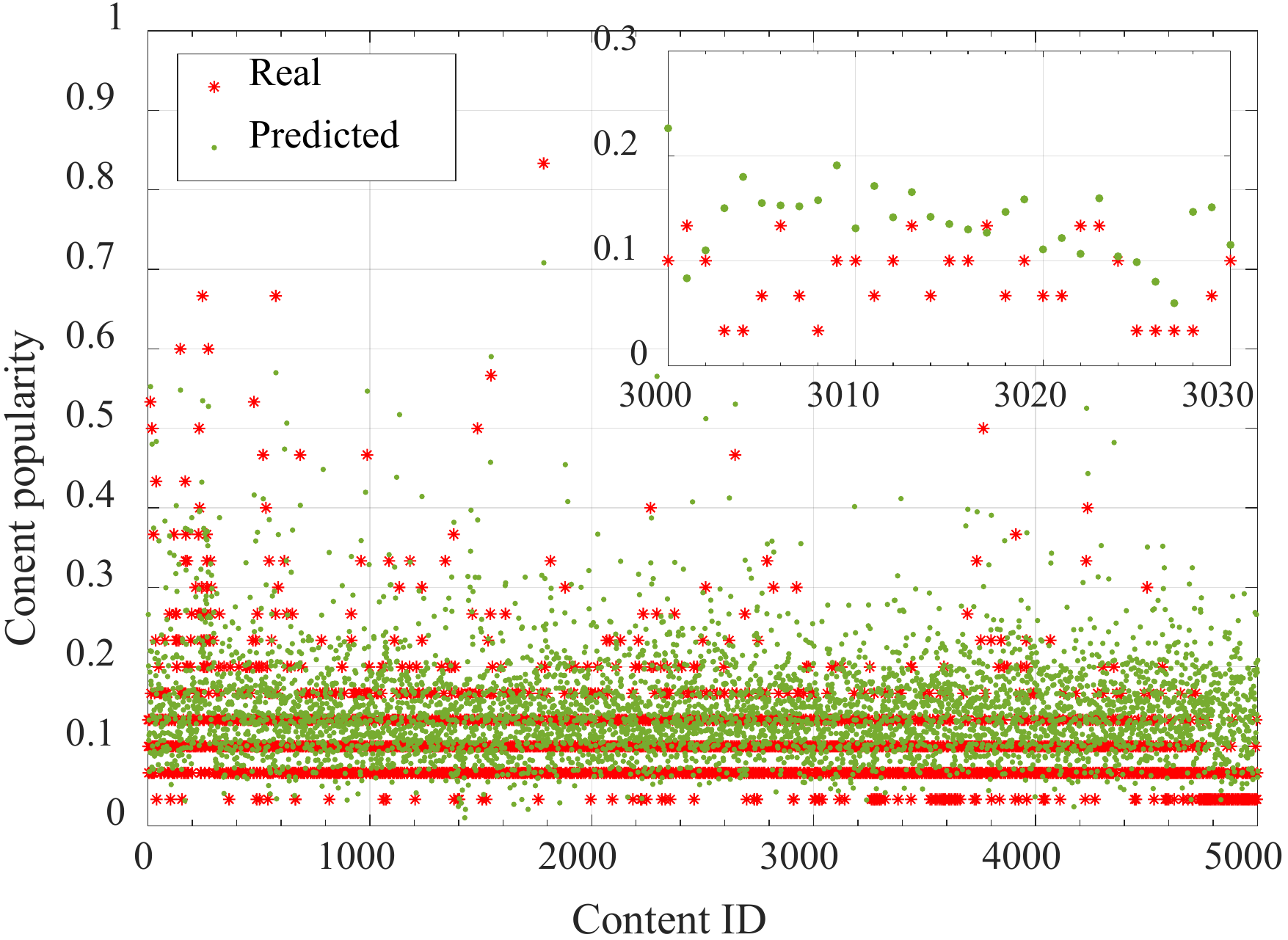}
\caption{Comparison of the predicted popularity and the real popularity in the finite time horizon $T$.}
\label{fig4}
\end{figure}


In Fig. \ref{fig4}, we show the  predicted popularity by using our proposed edge caching policy in comparison with the real popularity at the moment  when the  contents are  requested firstly by one of  the  regional users for the preceding 5000 contents in the finite time horizon $T$, where  the content ID is marked according to the first requesting  time of its representing content in chronological order.
It can be observed that the  error between the predicted popularity and the real popularity is very small. 
It can also be observed that the first requesting time of a regional popular content is random. Therefore, it is impractical for the existing edge caching policies to  cache the most popular contents directly without consideration of the content requesting time and temporal popularity dynamic.
It reveals the potential advantages of implementing a caching policy in conjunction with the content requesting time and real-time content popularity.

\begin{figure}[!t]
\centering
\includegraphics[height=4.5cm,width=7.5cm]{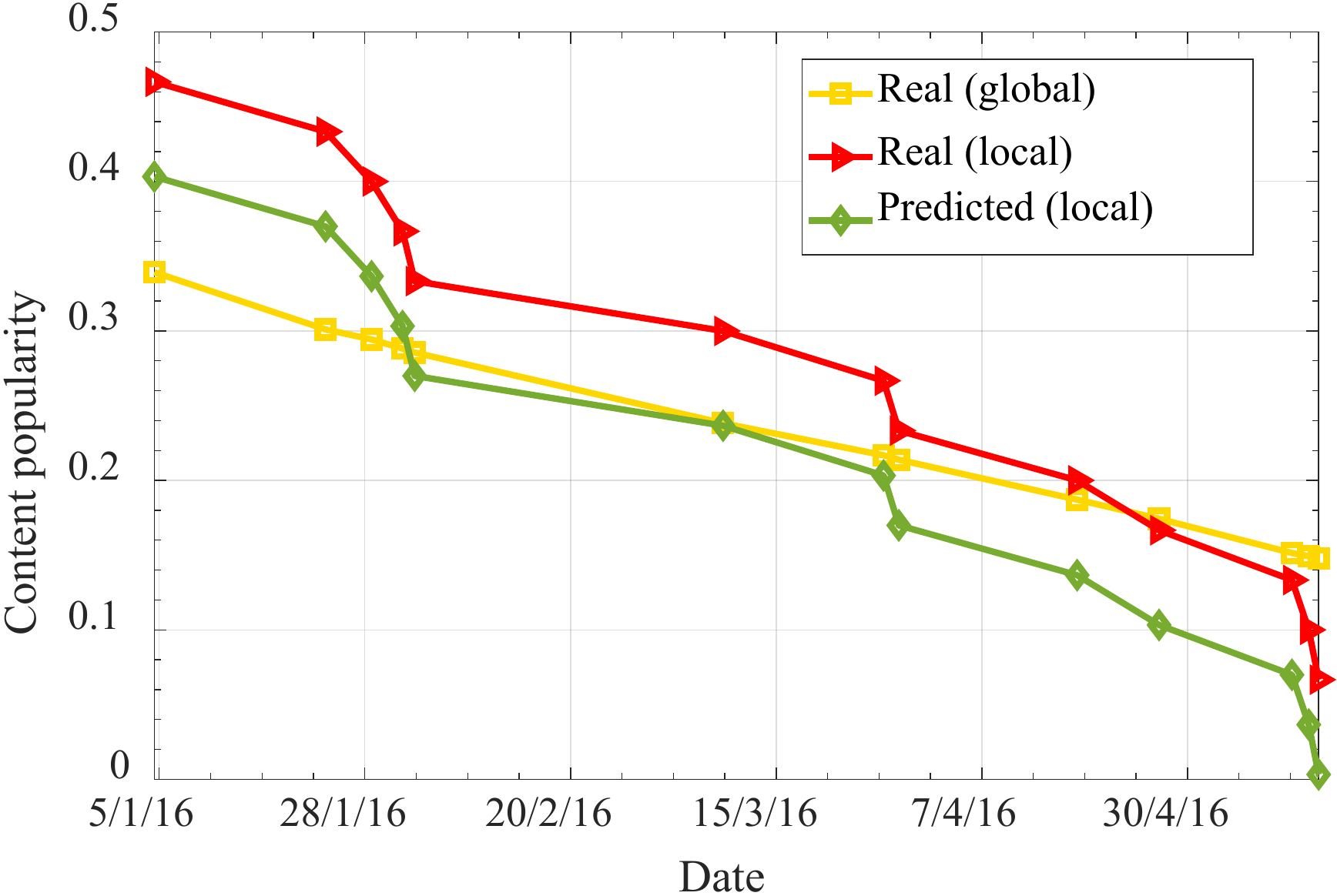}
\caption{Comparison of the predicted local popularity, the real local popularity, and the real global popularity for a certain requested content over time.}
\label{fig-new2}
\end{figure}

In Fig. \ref{fig-new2}, we  show the predicted local popularity, the real local popularity, and the real global popularity over time for a randomly selected content that is requested by the regional users.
It can be observed that both the global real  popularity and the local real popularity decrease with time whereas  the latter fluctuates slightly, which verifies that the global   popularity cannot precisely reflect the  temporal changes of the local   popularity. It can also be observed that the predicted local popularity changes with the real local   popularity in real time and the former approaches the latter, which reveals that our proposed policy can indeed track the real local popularity changes without delay.
This will guide the F-APs in a timely manner to clear the content that is no longer {popular}.

Without  consideration of  the requesting time difference  of the users and the duration difference of the continuous requests for the same content, we analyze the spatial changes of   content popularity.
In Fig. \ref{fig-new0}, we show the   predicted local popularity, the real local popularity for the considered region in comparison with the   real global popularity for all the regions, where  the content ID is marked according to  the real global popularity of its representing content in descending order.
It can be observed that most of the contents with the real local popularity larger than 0.2 have a content ID smaller that 2000, which reveals that most of the local popular contents originate from the global popular ones.
This also reveals that when a user preference model cannot be well learned due to sparse data, the global {popular} contents can be selected as the user's initial requesting contents.
We can observe that the contents with the content ID smaller than 500 generally have a larger real global popularity but a fluctuant real local popularity.
We can also observe that the real global popularity approximately follows a typical Zipf distribution whereas the real local popularity does not.
These observations reveal that the local popularity changes with the spatial popularity dynamic and does not necessarily follow the global popularity, and confirm the necessity of exploring the distribution of content popularity in a specific region.

\begin{figure}[!t]
\centering
\includegraphics[height=4.5cm,width=7.5cm]{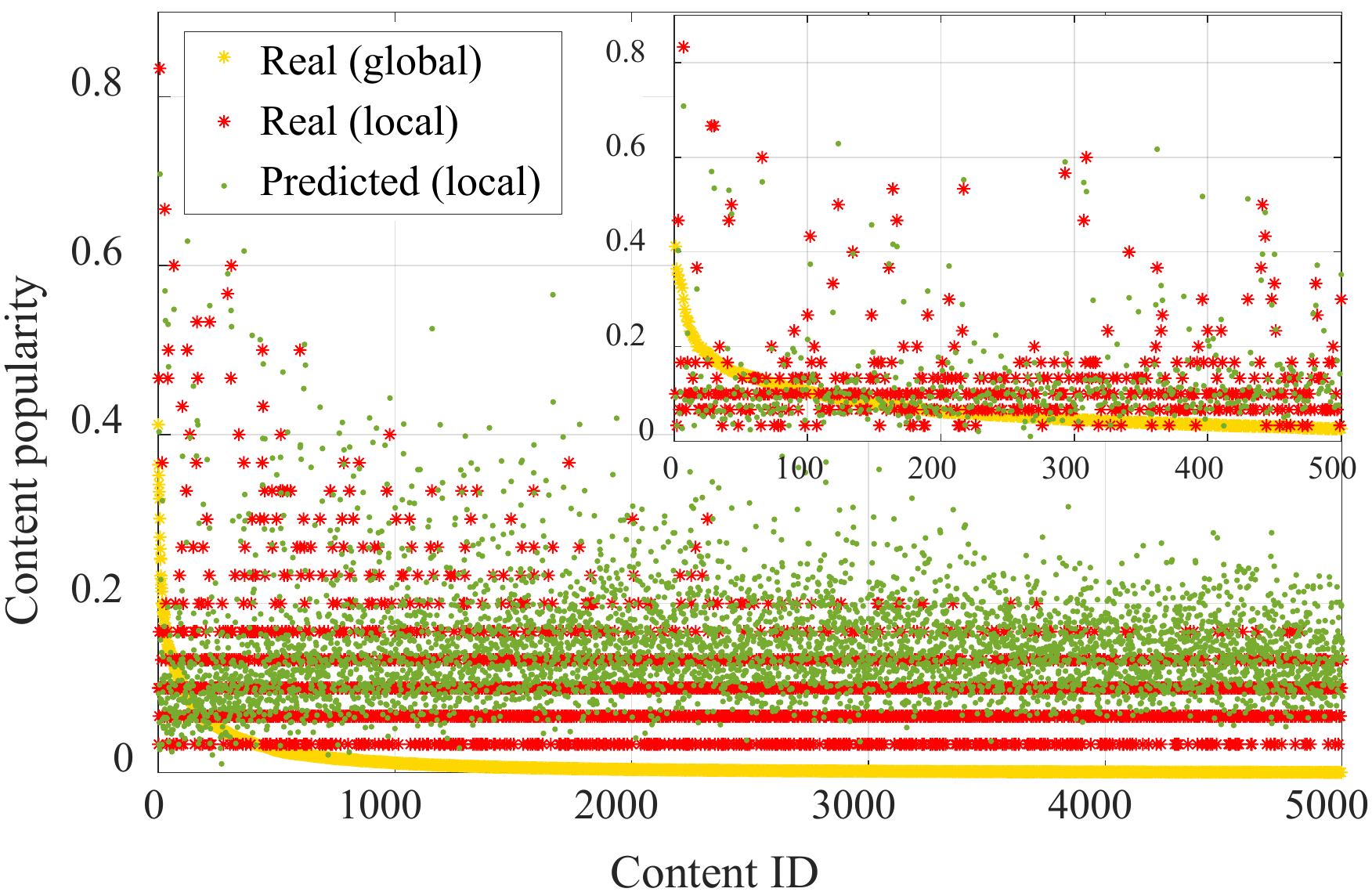}
\caption{\normalsize Comparison of the predicted local popularity, the real local popularity for the considered region, and the real global popularity for all the regions.}
\label{fig-new0}
\end{figure}


In Fig. \ref{hit}, we show the overall cache hit rate of our proposed policy with different $M \varphi$ in the finite time horizon $T$.
Also included in Fig. \ref{hit} are the overall cache hit rates of the four baseline policies, i.e., the FIFO  \cite{FIFO},  LRU  \cite{LRU}, LFU \cite{LFU} policies, and the optimal  policy with real content popularity.
The  total cache  size  $M \varphi$ increases from $1.5 \permil F =60$   to  $11.97\% F =4800$ contents with  $F  = 40110$.
It can be observed that the overall cache hit rates of all the considered policies are gradually increased with $M \varphi$.
It can also be observed that the overall cache hit rate of our proposed policy gradually approaches the optimal performance and is apparently superior to those of the other three baseline policies for all the considered cache sizes.
The reason is that the latter can not predict future content popularity.
Instead, our proposed  policy not only can predict the content popularity online,
but also can track its changes in real time.
Specifically,  it can be observed that our proposed policy only needs a  cache size of approximately 2400 contents to achieve the cache hit rate of 0.6 whereas the other three baseline policies need a cache size of approximately 4200 contents. 

%

\begin{figure}[!t]
\centering
\includegraphics[height=4.5cm,width=7.5cm]{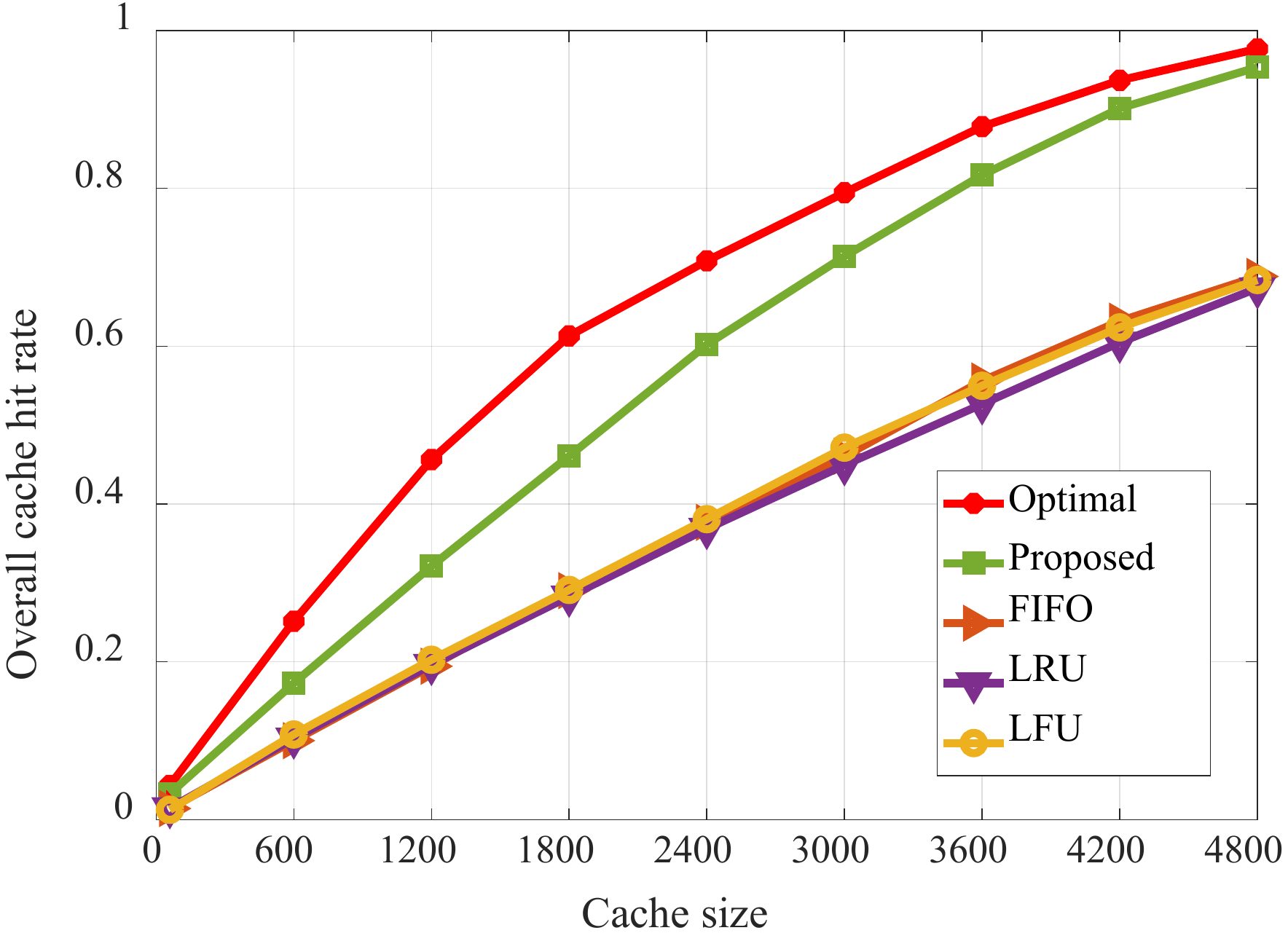}
\caption{The overall cache hit rate versus the total cache size ($M \varphi$) for the proposed policy and the baseline policies.}
\label{hit}
\end{figure}
\begin{figure}[!t]
\centering
\includegraphics[height=4.5cm,width=7.5cm]{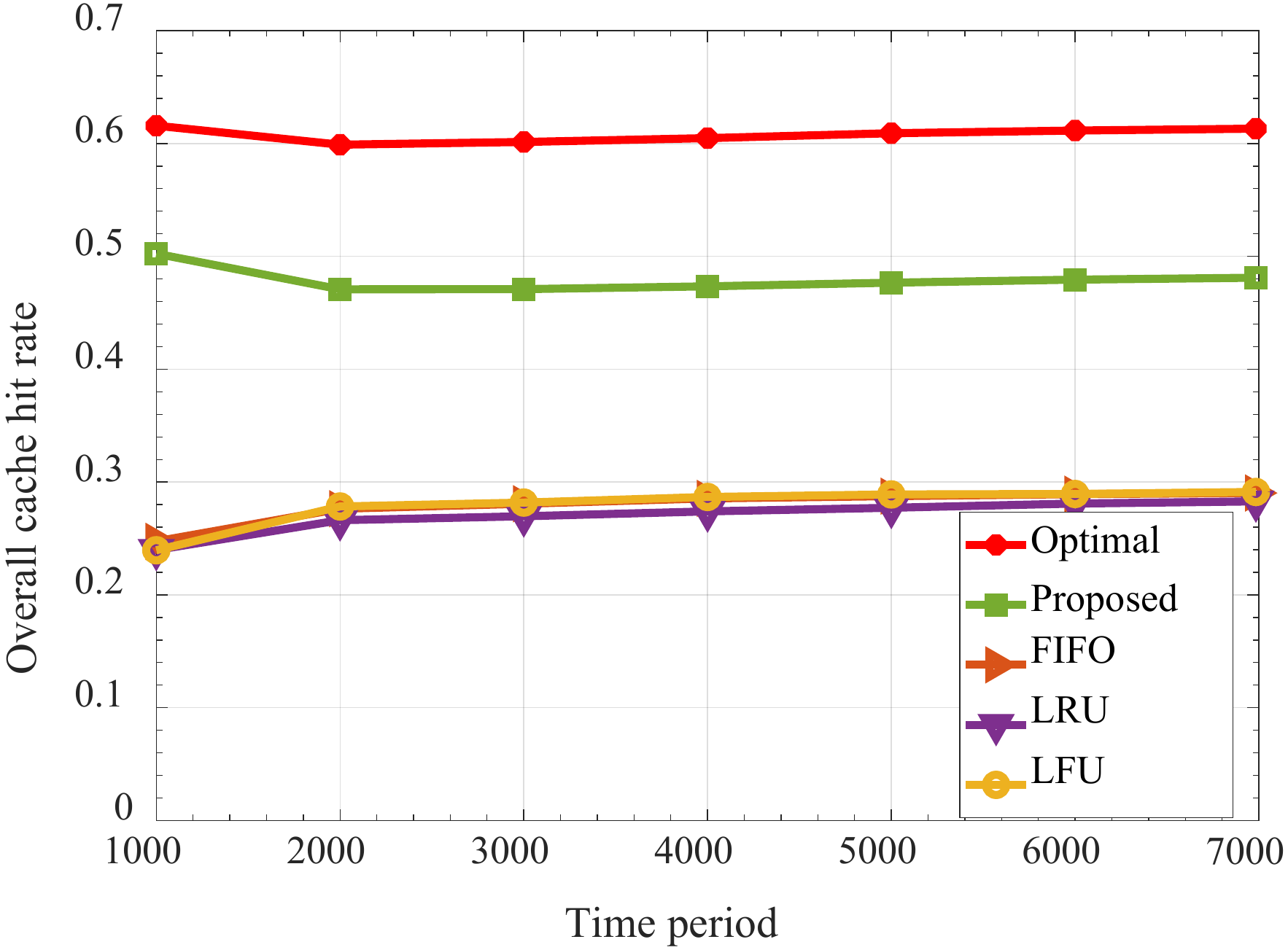}
\caption{\normalsize The overall cache hit rate versus  time period for the proposed  policy and the baseline policies.}
\label{hitTn}
\end{figure}

In Fig. \ref{hitTn}, we show the overall cache hit rates of our proposed policy and the four baseline  policies  until the current time period  with $M \varphi = 1800$.
It can be observed that the overall cache hit rate of our proposed policy follows consistently along with that of the optimal policy.
The reasons are that both of them make caching decisions according to the  content popularity, and that the distributions of the predicted popularity and the real one are consistent during every time period.
It can also be observed that the changes of the overall cache hit rates of all the policies are different. 
The FIFO, LRU and LFU policies have  low overall cache hit rates during the initial time periods due to the inevitable cold-start problem,
whereas our proposed policy can  cache the predicted popular contents according to the already-learned user preference model during the initial time period and then achieve a higher cache hit rate accordingly.
After that, the overall cache hit rates of all the policies gradually increase with the time period.
The reason is that caching decisions can be made more accurate with the increase of  user requests.

\section{Conclusions}

In this paper, we have proposed two edge caching architectures and a novel edge caching policy by learning user preference and predicting content popularity.
Our proposed policy can promptly detect the regional popular contents through online content popularity prediction, and {store it  to the local caches in real time}.
Specifically, we have proposed a self-starting offline user preference model updating mechanism by  monitoring the average logistic loss in real time,
which avoids frequent and blind training.
Analytical results have shown that our proposed policy has the capability of asymptotically approaching the optimal performance. Simulation results have shown that our proposed policy
achieves  a better caching performance (i.e., overall cache hit rate) compared to the other traditional policies.  Future work will explore the idea of on-device caching using distributed and low-latency machine learning.

\appendix
\section*{A. Proof of Theorem \ref{th0}}

It can be readily seen that the objective function of the optimization problem in  \eqref{eqeq-10} is convex.
Therefore, the iteratively updated model parameters can be obtained by setting the first order partial derivative of the corresponding objective function with respect to $\boldsymbol{w}_u$ to  zero as follows
\begin{multline}
{{\partial \left( {{{\left( {{{\boldsymbol{g}}^{\left( {1:k} \right)}}} \right)}^{\rm T}} {{\boldsymbol{w}}_u} + \frac{1}{2}\sum\limits_{k' = 1}^k {{\sigma ^{\left( k' \right)}}\left\| {{{\boldsymbol{w}}_u} - {\boldsymbol{w}}_u^{\left( k' \right)}} \right\|_2^2} } \right)}}/{{\partial {{{\boldsymbol{w}}_u}} }}
\\ =
{{\boldsymbol{g}}^{\left( {1:k} \right)}} + \sum\limits_{k' = 1}^k {{\sigma ^{\left( k' \right)}}\left( {{{\boldsymbol{w}}_u} - {\boldsymbol{w}}_u^{\left( k' \right)}} \right)} = 0.
\end{multline}
Correspondingly, the solution of the optimization problem in \eqref{eqeq-10}, i.e., ${\boldsymbol{w}}_u^{\left( {k + 1} \right)}$, should satisfy the following relationship
\begin{align}
\sum\limits_{k' = 1}^k {{\sigma ^{\left( k' \right)}}{\boldsymbol{w}}_u^{\left( {k + 1} \right)}}  = \sum\limits_{k' = 1}^k {{\sigma ^{\left( k' \right)}}{\boldsymbol{w}}_u^{\left( k' \right)}}  - {{\boldsymbol{g}}^{\left( {1:k} \right)}}.
\end{align}

Utilizing the relationship $\sum\nolimits_{k' = 1}^k {{\sigma ^{\left( k' \right)}}}  = {1}/{{{\eta ^{\left( k \right)}}}}$, we can establish that
\begin{align}\label{eq-14}
\frac{1}{{{\eta ^{\left( k \right)}}}}{\boldsymbol{w}}_u^{\left( {k + 1} \right)} = \sum\limits_{k' = 1}^k {{\sigma ^{\left( k' \right)}}{\boldsymbol{w}}_u^{\left( k' \right)}}  - {{\boldsymbol{g}}^{\left( {1:k} \right)}}.
\end{align}
When $k = 1$, from \eqref{eq-14}, we can readily establish:
$
{\boldsymbol{w}}_u^{\left( 2 \right)} = {\boldsymbol{w}}_u^{\left( 1 \right)} - {\eta ^{\left( 1 \right)}}{{\boldsymbol{g}}^{\left( 1 \right)}}.
$
When $k = 2,3, \cdots, K,$
replace $k$ in \eqref{eq-14} by $k-1$. Then, we have
\begin{align}\label{eq-15}
\frac{1}{{{\eta ^{\left( {k - 1} \right)}}}}{\boldsymbol{w}}_u^{\left( k \right)} = \sum\limits_{k' = 1}^{k - 1} {{\sigma ^{\left( k' \right)}}{\boldsymbol{w}}_u^{\left( k' \right)}}  - {{\boldsymbol{g}}^{\left( {1:k - 1} \right)}}.
\end{align}
From \eqref{eq-14} and \eqref{eq-15}, we can readily establish that
\begin{align}
\frac{1}{{{\eta ^{\left( k \right)}}}}{\boldsymbol{w}}_u^{\left( {k + 1} \right)} - \frac{1}{{{\eta ^{\left( {k - 1} \right)}}}}{\boldsymbol{w}}_u^{\left( k \right)} = {\sigma ^{\left( k \right)}}{\boldsymbol{w}}_u^{\left( k \right)} - {{\boldsymbol{g}}^{\left( k \right)}}.
\end{align}
Exploiting the relationship ${\sigma ^{\left( k \right)}} = \left( {{1}/{{{\eta ^{\left( k \right)}}}} - {1}/{{{\eta ^{\left( {k - 1} \right)}}}}} \right)$, we can further establish that
\begin{equation}
\frac{1}{{{\eta ^{\left( k \right)}}}}{\boldsymbol{w}}_u^{\left( {k + 1} \right)} - \frac{1}{{{\eta ^{\left( {k - 1} \right)}}}}{\boldsymbol{w}}_u^{\left( k \right)}=
 \left( {\frac{1}{{{\eta ^{\left( k \right)}}}} - \frac{1}{{{\eta ^{\left( {k - 1} \right)}}}}} \right){\boldsymbol{w}}_u^{\left( k \right)} - {{\boldsymbol{g}}^{\left( k \right)}}.
\end{equation}
Then, we have
\begin{equation}\label{eq_222}
{\boldsymbol{w}}_u^{(k+1)} = {\boldsymbol{w}}_u^{(k)} - \eta^{(k)} \boldsymbol{g}^{(k)}, \ k=2,3,\cdots,K.
\end{equation}
According to the above analysis, we can certainly establish
\begin{equation}
{\boldsymbol{w}}_u^{(k+1)} = {\boldsymbol{w}}_u^{(k)} - \eta^{(k)} \boldsymbol{g}^{(k)}, \ k=1,2,\cdots,K.
\end{equation}

It is obvious that the above solution of the optimization problem in \eqref{eqeq-10} is the same as the solution of the optimization problem in \eqref{eq-10}, i.e., \eqref{eq-18}. This completes the proof.

\section*{B. Proof of Theorem \ref{th1}}

We first analyze the upper bound of the expected prediction error of the content requested possibility of one single user for the overall $D$ requests in the finite time horizon $T$.  Without loss of generality, {the convex loss function} is chosen to be an absolute loss function, i.e., ${L_d}\left( {{{\boldsymbol{w}}_u}} \right) = \left| {{{\hat p}_{t,u,d}} - {p_{t,u,d}}} \right|$. Then, from \eqref{eq17}, the following relationship can be readily established
\begin{align}\label{eq18}
\mathbb{E} \sum\limits_{t = 1}^T {\sum\limits_{d = 1}^{{D_t}} {\left| {{{\hat p}_{t,u,d}} - {p_{t,u,d}}} \right| \le } } {W_u}{G_u}\sqrt {2{D}}  + {\tau _u}.
\end{align}
Furthermore, by using the relationship in \eqref{eq5}, the expected  popularity prediction error for the overall $D$ requests in the finite time horizon $T$ can be formulated as follows
\begin{equation}
\mathbb{E}\sum\limits_{t = 1}^T {\sum\limits_{d = 1}^{{D_t}} {\left| {{{\hat P}_{t,d}} - {P_{t,d}}} \right|} }  =
\mathbb{E}\sum\limits_{t = 1}^T {\sum\limits_{d = 1}^{{D_t}} {{{\sum\limits_{u = 1}^{{U_t}} \frac{1} {{{U_t}}}{\left| {{{\hat p}_{t,u,d}} - {p_{t,u,d}}} \right|} }}} }.
\end{equation}
By considering that $U_{\min } \le U_t \le U_{\max }$,
the following inequation can be readily established
\begin{equation}
\mathbb{E}\sum\limits_{t = 1}^T {\sum\limits_{d = 1}^{{D_t}} {\left| {{{\hat P}_{t,d}} - {P_{t,d}}} \right|} }  \le  \frac{1}{{{U_{\min }}}}{\mathbb{E}{\sum\limits_{u = 1}^{{U_{\max }}} {\sum\limits_{t = 1}^T {\sum\limits_{d = 1}^{{D_t}} {\left| {{{\hat p}_{t,u,d}} - {p_{t,u,d}}} \right|} } } }}.
\end{equation}
Then, from \eqref{eq18}, we can obtain
\begin{equation}
\mathbb{E}\sum\limits_{t = 1}^T {\sum\limits_{d = 1}^{{D_t}} {\left| {{{\hat P}_{t,d}} - {P_{t,d}}} \right|} }  \le
\frac {U_{\max }}{{{U_{\min }}}} \left( {{W_{\max }}{G_{\max }}\sqrt {2D}  + {\tau _{\max }}} \right).
\end{equation}

This completes the proof.

\section*{C. Proof of Theorem \ref{th2}}

During each time period, the proposed  edge caching policy always tries to cache the ${M\varphi }$ most popular contents.
In the ideal case,  the  contents with the ${M\varphi }$  largest real  popularities, i.e., $\left\{ {P_{t,1}^{'},P_{t,2}^{'},\cdots,P_{t,M\varphi }^{'}} \right\}$, will be cached.
Due to the popularity prediction errors, the contents with the ${M\varphi }$  largest predicted   popularities
will however be cached in our proposed edge caching policy.
Assume that the predicted   popularities are sorted in descending order as follows: $\hat P_{t,{{d_1'}}}^{{'}} \ge \hat P_{t,{{d_2'}}}^{{'}} \ge  \cdots  \ge \hat P_{t,{{d_f'}}}^{{'}} \ge  \cdots  \ge \hat P_{t,{{d_{F_t}'}}}^{{'}}.$
Obviously, $\{ {\hat P_{t,{{d_1'}}}^{'}, \hat P_{t,{{d_2'}}}^{'},\cdots, \hat P_{t,{{d_{M\varphi}'}}}^{'}} \}$ represent the  ${M\varphi }$  largest predicted  popularities.
Then, the following relationship can be readily established
\begin{equation}\label{eq22}
\sum\limits_{f = 1}^{M\varphi } {\hat P_{t,{d_f'}}^{{'}}}  \ge \sum\limits_{d = 1}^{M\varphi } {\hat P_{t,d'}^{{'}}}.
\end{equation}
According to  the definition of $\Delta {P_t}$, we have $| {\hat P_{t,d'}^{{'}} - P_{t,d'}^{{'}}} | \le \Delta {P_t}$.
Correspondingly, we have
\begin{equation}
\hat P_{t,d'}^{{'}}  \ge P_{t,d'}^{{'}} - \Delta {P_t}, \
\sum\limits_{d' = 1}^{M\varphi } {\hat P_{t,d'}^{{'}}} \ge \sum\limits_{d' = 1}^{M\varphi } {\left( {P_{t,d'}^{{'}} - \Delta {P_t}} \right)}. \label{eq31}
\end{equation}
From \eqref{eq20}, the following relationship holds
\begin{equation} \label{eq32}
{{\sum\limits_{d' = 1}^{M\varphi } {P_{t,d'}^{{'}}} }} = \mathcal{H}_t \left( \Phi ^* \right) {{\sum\limits_{d' = 1}^{F_t} {P_{t,d'}^{{'}}} }}.
\end{equation}
Correspondingly, from \eqref{eq22}, \eqref{eq31} and \eqref{eq32}, we can readily obtain
\begin{align}\label{eq24}
\sum\limits_{f = 1}^{M\varphi } {\hat P_{t,{d_f'}}^{{'}}}
\ge \mathcal{H}_t \left( \Phi ^* \right) \sum\limits_{d' = 1}^{F_t} {P_{t,d'}^{{'}}}  - \Delta {P_t} M\varphi .
\end{align}

According to the previous descriptions in Section II, the achievable cache hit rate $\mathcal{H}_t \left( \Phi  \right)$ during the $t$th time period can be defined as follows
\begin{equation}
\mathcal{H}_t \left( \Phi  \right)  = \frac{1}{{{D_t}}} {{\sum\limits_{d = 1}^{{D_t}} {{\theta _{t,d}}\left( {f\left( d \right)} \right)} }}.
\end{equation}
{We have assumed previously that {the requests of the same content are concentrated} in one time period.
Therefore, the contents with the corresponding popularities $\{ {\hat P_{t,{d_1'}}^{'}, \hat P_{t,{d_2'}}^{'},\cdots, \hat P_{t,{d_{M\varphi}'}}^{'}} \}$ will be cached  after the first content request with an initial cache miss   happens during the $t$th time period. Then, we have
\begin{equation}
{{\sum\limits_{d = 1}^{{D_t}} {{\theta _{t,d}}\left( {f\left( d \right)} \right)} }} =  {{\sum\limits_{f = 1}^{M\varphi } {\left\lfloor { {U_t} \hat P_{t,{d_f'}}^{{'}} - 1} \right\rfloor } }},
\end{equation}
where $\lfloor \cdot \rfloor$ denotes the floor operation.
Then, by using the relationships $D_t =  {{\sum\nolimits_{d' = 1}^{F_t} U_t { P_{t,d'}^{{'}}} }}$ and $\lfloor x \rfloor \ge x -1$, we can further establish the following {inequation}
\begin{equation}\label{eq35}
\mathcal{H}_t \left( \Phi  \right)
 \ge \frac{{\sum\nolimits_{f = 1}^{M\varphi } { {  \hat P_{t,{d_f'}}^{{'}} - 2/{U_t} }  } }}{{\sum\nolimits_{d' = 1}^{F_t} { P_{t,d'}^{{'}}} }}.
\end{equation}}
By utilizing \eqref{eq24},
the following relationship can be readily established
\begin{equation}\label{eq37}
\mathcal{H}_t \left( \Phi  \right)
 \ge \mathcal{H}_t \left( \Phi ^* \right)
 -\frac{M\varphi}{D_t} (\Delta {P_t} U_t + 2 ).
\end{equation}

This completes the proof.

\section*{D. Proof of Theorem \ref{th3}}

From \eqref{eq21}, \eqref{eq-new21}, and \eqref{eq39}, we have
\begin{equation}
R\left( D \right) = \mathbb{E}\frac{1}{{\sum\nolimits_{t = 1}^T {{D_t}} }}{{\sum\limits_{t = 1}^T {\left[ {\mathcal H_t \left( {{\Phi ^*}} \right) - \mathcal H_t \left( \Phi  \right)} \right]{D_t}} }}.
\end{equation}
By utilizing the analytical results from Theorem \ref{th2}, the following relationship can be readily established
\begin{equation}
R\left( D \right) \le \mathbb{E}\frac{{M\varphi}}{D}\sum\limits_{t = 1}^T {(\Delta {P_t} U_t + 2 )}.
\end{equation}
According to the definition of  $ \Delta {P_t} $, we have  $ \Delta {P_t}  \le
\sum\nolimits_{d=1}^{D_t} | {\hat P_{t,d}^{{'}} - P_{t,d}^{{'}}} | $.
Exploiting the above relationship and considering that $U_t \le U_{\rm{max}}$, we can further obtain
\begin{equation}
R\left( D \right) \le
\mathbb{E}\frac{ U_{\rm{max}} M\varphi }{D}\sum\limits_{t = 1}^T {\left( \sum\limits_{d=1}^{D_t} | {\hat P_{t,d}^{{'}} - P_{t,d}^{{'}}} |   + \frac{2}{U_{\rm{max}}} \right)}.
\end{equation}
By using the analytical results from Theorem \ref{th1}, the following relationship can be readily established
\begin{multline}
R\left( D \right) \le \\
\frac{ U_{\rm{max}} M\varphi }{D}
\left[\frac {U_{\max }}{{{U_{\min }}}} \left( {{W_{\max }}{G_{\max }}\sqrt {2D}  + {\tau _{\max }}} \right) + \frac{2T}{U_{\rm{max}}}\right].
\end{multline}

This completes the proof.


%

\bibliographystyle{IEEEtran}
\bibliography{manuscript-jrnl}

\end{document}